%

\documentclass[11pt]{article}
\usepackage{enumitem}
\makeatletter
\newcommand{\mylabel}[2]{#2\def\@currentlabel{#2}\label{#1}}
\makeatother
\usepackage{braket}
\usepackage{amsfonts,amssymb}
\usepackage{color}
\usepackage{amsthm, amssymb, latexsym, amsmath,  
}
\usepackage[margin=1.2 in]{geometry}
\usepackage[ngerman,english]{babel}
\usepackage[utf8x]{inputenc}
\usepackage[OT2,T1]{fontenc}

\addtolength{\textwidth}{0.8cm}
\addtolength{\oddsidemargin}{-0.4cm}
\addtolength{\evensidemargin}{-0.4cm}


\newtheorem{theorem}{Theorem}[section]

\newtheorem{lemma}[theorem]{Lemma}

\newtheorem{corollary}[theorem]{Corollary}

\theoremstyle{definition}

\newtheorem{remark}{Remark}

\newcommand{\DETAILS}[1]{}

\newcommand{\re}{{\rm{Re\, }}}
\newcommand{\im}{{\rm{Im\, }}}
\renewcommand{\Im}{\mathrm{Im}}
\renewcommand{\Re}{\mathrm{Re}}

\newcommand{\C}{\mathbb{C}}
\newcommand{\R}{\mathbb{R}}

\newcommand{\N}{\mathbb{N}}

\newcommand{\cA}{{\cal{A}}}

\newcommand{\cH}{{\cal{H}}}

\newcommand{\cO}{{\cal{O}}}

\newcommand{\cE}{\mathcal{E}}



\renewcommand{\i}{\mathrm{i}}

\newcommand{\vphi}{{\varphi}}
\newcommand{\e}{{\mathrm{e}}}



\newcommand{\scp}[2]{\left\langle #1\text{,}\, #2\right\rangle}
\newcommand{\tr}[1]{\mathrm{Tr}#1}
\newcommand{\ima}[1]{\mathrm{Im}#1}




\begin{document}
	
	\title{Derivation of the Hartree equation for compound Bose gases in the mean field limit}
	
	\author{Ioannis Anapolitanos\thanks{Insitute for Analysis,
			Karlsruhe Institute of Technology, Englerstra\ss{}e 2,  76131 Karlsruhe, Germany.}, Michael Hott\thanks{ Department of Mathematics,
			University of Texas at Austin, 2515 Speedway,
			Austin, TX 78712, USA.},
	 Dirk Hundertmark\thanks{Insitute for Analysis,
	 	Karlsruhe Institute of Technology, Englerstra\ss{}e 2,  76131 Karlsruhe, Germany.} }
	
	\bigskip
	
	\bigskip
	
	\bigskip

	\bigskip
	\bigskip

	\maketitle
	
	\begin{abstract}
	   We consider mixtures of Bose gases of different species. We prove that in the mean field limit and under suitable conditions on the initial condition a system composed of two Bose species can be effectively described by a system of coupled Hartree equations. Moreover, we derive quantitative bounds on the rates of convergence of the reduced density matrices in Sobolev trace norms. We treat both the non-relativistic case in the presence of an external magnetic field $A\in L^2_{\text{loc}}(\R^3;\R^3)$ and the semi-relativistic case.
	\end{abstract}
	
\section{Introduction}

We consider a system of $N_1$ bosons of one species and $N_2$ bosons of another species all interacting with each other, described by the Hamiltonian 
\begin{equation}\label{manybodyham}
\begin{split}
H_{N}\;=\;&\sum_{i=1}^2\sum_{j=1}^{N_i}\left[S_j^{(i)} + \frac{1}{N_i-1} \sum_{k=j+1}^{N_i} u^{(i,i)}(x_{j}^{(i)}-x_{k}^{(i)})\right]\\& + \frac{1}{m(N_1,N_2)} \sum_{i=1}^{N_1} \sum_{j=1}^{N_2} u^{(1,2)}(x_{i}^{(1)}-x_{j}^{(2)})
\end{split}
\end{equation}
with $N:=(N_1,N_2)\in\N^2$. Here $u^{(i,j)}$ plays the role of a two-body interaction potential between bosons of the $i^{th}$ and $j^{th}$ species, $m(N_1,N_2)$ plays the role of a mean of $N_1$ and $N_2$ and will be determined below. We will choose it to ensure that the energy scales, up to a factor of order of magnitude $\cO(1)$, between 
$\min(N_1,N_2)$ and $\max(N_1,N_2)$. The position of the $j^{th}$ particle of the $i^{th}$ species is denoted by $x_j^{(i)} \in \R^3$ and $S_j^{(i)}$ denotes its kinetic energy, possibly also including an external potential. In our setup, we are interested in the cases 
\begin{equation*}
	S_j^{(i)}\in\left\{\frac{1}{2m_i}D_{A,{x_{j}^{(i)}}}^2,\sqrt{m_i^2-\Delta_{x_j^{(i)}}}\right\},
\end{equation*}
where $D_{A}^2:=(-\i\nabla+A)^2$ denotes the magnetic Laplacian for a magnetic field $A:\R^3 \to \R^3$ and $m_i$ is the mass of a particle of the $i^{th}$ spieces. In addition, we assume that the interaction potentials are Yukawa potentials. More precisely, there are $\mu^{(1,1)}, \mu^{(2,2)}, \mu^{(1,2)} \geq0$, and $\lambda^{(1,1)}, \lambda^{(2,2)}, \lambda^{(1,2)} \in\R$, s.t.
\begin{equation}\label{eq:yukawa}
u^{(i,j)}(x)=\lambda^{(i,j)} \frac{\e^{-\mu^{(i,j)}|x|}}{|x|}\qquad\mbox{for\ } 1\leq i\leq j\leq 2.
\end{equation}
	It is well-known that imposing $A\in L^2_{\text{loc}}(\R^3;\R^3)$ leads to a self-adjoint operator $H_N$, see \cite{simon-schroedinger-forms}. This is based on the fact that $D_A^2$ can be realized as a self-adjoint operator as we explain in Appendix \ref{app:wellposed} and on the Kato-Rellich Theorem, see for example Theorem X.12 in \cite{RS2}. In  Appendix \ref{app:self-adjoint}, we also show that in the semi-relativistic case the operator  $H_N$ is  self-adjoint provided that the assumptions \ref{itm:sr} and \ref{itm:mf} below  hold. 

\par

In this work, we are interested in the case where both boson types initially are in a BEC and we want to investigate the dynamics of such a system. More precisely, we study the Cauchy problem
\begin{equation}\label{totalpde}
\begin{split}
\mathrm{i}\partial_t\Psi_{N,t}\;&=\;H_N\Psi_{N,t}\\
\Psi_{N,0}\;&=\;\psi_0^{\otimes N_1} \otimes  \vphi_0^{\otimes N_2}
\end{split}
\end{equation} 
with $\|\psi_0\|_{2}=\|\vphi_0\|_{2}=1$ on the Hilbert space
\begin{equation*}
\mathcal{H}_N\;:=\;L^2(\R^3)^{\otimes_s N_1}\otimes L^2(\R^3)^{\otimes_s N_2},
\end{equation*}
where $L^2(\R^3)^{\otimes_s N_i}$ denotes the $N_i$-fold symmetric tensor product of $L^2(\R^3)$.
We will show that under certain assumptions, the system remains approximately a product of Bose condensates, where the terms of the product are described by a system of coupled Hartree equations.  We start by heuristically deriving the evolution equation of the effective field. Using the ansatz $\Psi_{N,t} = \psi_t^{\otimes N_1} \otimes  \vphi_t^{\otimes N_2}$, we obtain
\begin{equation}\label{eq:chndef} 
\begin{split}
&\big\langle\Psi_{N,t},H_N\Psi_{N,t}\big\rangle\; =\;
N_1\langle \psi_t,  S_1^{(1)}  \psi_t \rangle +  N_2\langle \vphi_t,S_1^{(2)}  \vphi_t \rangle +  \frac{N_1}{2} \langle\psi_t^{\otimes2},  u^{(1,1)}(x_1^{(1)}-x_2^{(1)})  \psi_t^{\otimes2} \rangle \\
&\phantom{=\;}+\frac{N_2}{2}\langle \vphi_t^{\otimes2},  u^{(2,2)}(x_1^{(2)}-x_2^{(2)})\vphi_t^{\otimes2}\rangle
+\frac{N_1N_2}{m(N_1,N_2)} \langle  \psi_t \otimes  \vphi_t,  u^{(1,2)}(x_1^{(1)}-x_1^{(2)})  \psi_t \otimes \vphi_t \rangle\\
&=:\; \mathcal{H}_N(\overline{\psi_t},\overline{\vphi_t},\psi_t,\vphi_t).
\end{split}
\end{equation}

This leads to the dynamics 
\begin{equation}\label{hartree1}
\begin{split}
\i \partial_t \psi_t=N_1^{-1}\partial_{\overline{\psi_t}} \mathcal{H}_N(\overline{\psi_t},\overline{\vphi_t},\psi_t,\vphi_t)&=  S_1^{(1)} \psi_t +  u^{(1,1)} * |\psi_t|^2 \psi_t + \frac{N_2}{m(N_1,N_2)}u^{(1,2)} * |\vphi_t|^2 \psi_t\\
\i \partial_t \vphi_t=N_2^{-1} \partial_{\overline{\vphi_t}} \mathcal{H}_N\left(\overline{\psi_t},\overline{\vphi_t},\psi_t,\vphi_t\right)&= S_1^{(2)} \vphi_t +  u^{(2,2)} * |\vphi_t|^2 \vphi_t + \frac{N_1}{m(N_1,N_2)} u^{(1,2)} * |\psi_t|^2 \vphi_t\\
(\psi_t,\vphi_t)_{t=0}&=(\psi_0,\vphi_0).
\end{split} 
\end{equation}
The factor $N_i^{-1}$ arises due to the fact that an effective equation for the $i^{th}$ species describes the averaged dynamics. To formulate it differently, an effective equation describes the motion of a typical particle. A formal computation shows that $\mathcal{H}_N\left(\overline{\psi_t}, \overline{\vphi_t}, \psi_t, \vphi_t\right)$ is conserved under the dynamics \eqref{hartree1}. In Appendix \ref{app:wellposed} we discuss how this can be rigorously shown under the assumptions below.

To obtain limiting dynamics of the coupled fields, we have to impose $\lim_{N_1,N_2\to\infty}\frac{N_j}{m(N_1,N_2)}\in\R^+_0$. Note that at least one of these limits is positive since $m(N_1,N_2)\leq\cO(\max(N_1,N_2))$. If the other were zero, we would obtain dynamics of that field independent of the presence of the other species. We will assume that both limits exist and are positive. This gives $m(N_1,N_2)=\cO(\sqrt{N_1N_2})(=\cO(N_1)=\cO(N_2))$, for convenience we choose $m(N_1,N_2)=\sqrt{N_1N_2}$. For simplicity, we also consider only those $(N_1,N_2)\in\N^2$ for which
\begin{align}
R\;&:=\;\sqrt{\frac{N_1}{N_2}}\label{eq:ratiofix}\in\R,
\end{align}
is fixed even though it would be sufficient to impose $\sqrt{\frac{N_1}{N_2}}\to R$ as $N_1,N_2\to\infty$. This leads to the dynamics
\begin{equation}\label{hartree}
\begin{split}
	\i \partial_t \psi_t=  S^{(1)} \psi_t +  u^{(1,1)} * |\psi_t|^2 \psi_t & + \frac{1}{R} u^{(1,2)} * |\vphi_t|^2 \psi_t\\
	\i \partial_t \vphi_t= S^{(2)} \vphi_t +  u^{(2,2)} * |\vphi_t|^2 \vphi_t & + R u^{(1,2)} * |\psi_t|^2 \vphi_t\\
	(\psi_t,\vphi_t)_{t=0}&=(\psi_0,\vphi_0).
\end{split} 
\end{equation}
where 
\begin{equation}\label{eq:kinetic}
	S^{(i)}\in\left\{\frac{1}{2m_i}D_{A}^2,\sqrt{m_i^2-\Delta}\right\},
\end{equation}
is the kinetic energy of the $i^{\rm th}$ species. 

The problem of compositions of Bose-Einstein condensates has been investigated in the physics literature, see for example \cite[Chapter 21]{PS}, \cite{CNPV}, \cite{MBGCW}, \cite{MMRRI}, and references therein. Mathematically, this problem was first studied in \cite{He}, where it was observed that the method developed by Pickl in \cite{Pi} can be used to treat mixtures of Bose-Einstein Condensates. There is a further general result by Olgiati and Michelangeli see \cite{MO},  where the problem is studied for a quite general class of potentials and assumptions on the initial data. Just a few days ago, there appeared a work by Olgiati, see \cite{O}, where the Gross-Pataevskii regime is also studied. After stating our main results, we will explain in the introduction what is new in our results.

\subsection{General assumptions and global well-posedness of \eqref{hartree}} 

Throughout this work, we assume the \emph{mean-field} condition 
\begin{enumerate}
	\item[\mylabel{itm:mf}{{\rm (}MF{\rm )}}] the scaling equations \eqref{eq:ratiofix} and $m(N_1,N_2)=\sqrt{N_1N_2}$ hold. 
\end{enumerate}
Depending on whether we consider the magnetic or semi-relativistic case, we also assume
the following regularity conditions.
\begin{enumerate}
	\item[\mylabel{itm:rm}{(RM)}] $A \in L_{\text{loc}}^2(\R^3)$,  $\psi_0,\vphi_0\in H_A^1(\R^3)$, $\|\psi_0\|_{L^2}=\|\vphi_0\|_{L^2}=1$ and $S_j^{(i)}=\frac{1}{2m_i}D_{A,{x_{j}^{(i)}}}^2$ (magnetic case)
	\item[\mylabel{itm:rs}{(RS)}] $\psi_0,\vphi_0\in H^1(\R^3)$, $\|\psi_0\|_{L^2}=\|\vphi_0\|_{L^2}=1$ and $S_j^{(i)}=\sqrt{m_i^2-\Delta_{x_j^{(i)}}}$ (semi-relativistic case).
\end{enumerate}
As we will see in the next theorem, the following condition is sufficient for \eqref{hartree} to be globally well-posed in the semi-relativistic case.
\begin{itemize}
	\item[\mylabel{itm:sr}{(SR)}]
	$(\lambda^{(1,2)}_{-})^2\;<\;\left(\frac{4}{\pi}-\lambda_{-}^{(1,1)}\right)\left(\frac{4}{\pi}-\lambda^{(2,2)}_{-}\right),\quad \lambda^{(1,1)}_{-},\lambda^{(2,2)}_{-}\;<\;\frac{4}{\pi}$.
\end{itemize}
In here $\lambda_{-}^{(i,j)}:=-\min(0,\lambda^{(i,j)})$ refers to the negative-part of $\lambda^{(i,j)}$.

\begin{theorem}\label{thm:wellposed}
	\begin{enumerate}[label=\textnormal{(\arabic*)}]
		\item Assume \ref{itm:rm}. Then the Cauchy problem \eqref{hartree} is globally well-posed. More precisely, for all $T>0$ there is a unique mild solution
	\begin{equation}\label{eq:CC1}
	(\psi, \vphi) \in C\left([0,T), H_A^1(\R^3)\times H_A^1(\R^3)\right) \cap  C^1\left([0,T),  H_A^{-1}(\R^3) \times H_A^{-1}(\R^3) \right)
	\end{equation}
	of \eqref{hartree}, which continuously depends on $(\psi_0, \vphi_0)$. 
	\item Assume \ref{itm:rs} and \ref{itm:sr}. Then the Cauchy problem \eqref{hartree} is globally well-posed. More precisely, for all $T>0$ there is a unique mild solution
	\begin{equation}\label{eq:CCr}
	(\psi, \vphi) \in C\left([0,T), H^1(\R^3)\times H^1(\R^3)\right) \cap  C^1\left([0,T), L^2(\R^3)\times L^2(\R^3)\right)
	\end{equation}
	of \eqref{hartree}, which continuously depends on $(\psi_0, \vphi_0)$. 
	\end{enumerate}
\end{theorem}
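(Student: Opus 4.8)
For both parts the plan is the classical two-step scheme: \emph{local} well-posedness in the energy space by a contraction argument, followed by \emph{global} extension via the conservation laws. Set $U=(\psi,\vphi)$, $\mathcal S=\mathrm{diag}(S^{(1)},S^{(2)})$ and $\mathcal F(U)=(F_1(U),F_2(U))$ with
\[
F_1(U)=(u^{(1,1)}*|\psi|^2)\,\psi+R^{-1}(u^{(1,2)}*|\vphi|^2)\,\psi
\]
and $F_2$ its symmetric partner, so that \eqref{hartree} reads $\i\partial_tU=\mathcal SU+\mathcal F(U)$, $U(0)=U_0$. The three tools I would use throughout are: (i) $\e^{-\i tS^{(i)}}$ is unitary on $L^2$ and on the form domain $X$ of $\mathcal S$, where $X=H^1_A(\R^3)^2$ under \ref{itm:rm} and $X=H^1(\R^3)^2$ under \ref{itm:rs} (self-adjointness as discussed in the Appendices); (ii) the pointwise diamagnetic inequality $|\nabla|\psi||\le|D_A\psi|$, which in the magnetic case reduces every Sobolev-type estimate for $\psi$ to the corresponding one for $|\psi|$; (iii) the fact that a Yukawa kernel is smoothing of order two, $\widehat{u^{(i,j)}}(\xi)=4\pi\lambda^{(i,j)}(|\xi|^2+(\mu^{(i,j)})^2)^{-1}$, together with $|u^{(i,j)}(x)|\lesssim|x|^{-1}$, $|\nabla u^{(i,j)}(x)|\lesssim|x|^{-2}$, and the splitting of these kernels into a locally-$L^p$ singular part and an exponentially decaying bounded part.

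\textbf{Local theory.} Using Young's inequality, the Hardy--Littlewood--Sobolev inequality and $X\hookrightarrow L^p$ ($2\le p\le6$, via (ii) in the magnetic case) one bounds $\|u^{(i,j)}*|\psi|^2\|_{L^\infty}$ and $\|\nabla(u^{(i,j)}*|\psi|^2)\|_{L^6}$ by $C\|\psi\|_X^2$; together with $D_A(fg)=(-\i\nabla f)g+f\,D_Ag$ for real $f$ this gives $\|\mathcal F(U)\|_X\le C(1+\|U\|_X^2)\|U\|_X$, and the same bounds applied to differences show $\mathcal F\colon X\to X$ is Lipschitz on bounded sets. A standard fixed-point argument for semilinear equations driven by a unitary group then yields a unique maximal mild solution $U\in C([0,T_{\max}),X)$, depending continuously on $U_0$ (by Gr\"onwall) and obeying the blow-up alternative ``$T_{\max}<\infty\Rightarrow\|U_t\|_X\to\infty$ as $t\uparrow T_{\max}$''; the time regularity asserted in \eqref{eq:CC1}, \eqref{eq:CCr} follows on inserting the solution back into the equation, using $S^{(1)}\colon H^1_A\to H^{-1}_A$, resp.\ $\sqrt{m_1^2-\Delta}\colon H^1\to L^2$.

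\textbf{Conservation laws; the magnetic case.} Each field evolves under a real (time-dependent) potential, so $\|\psi_t\|_{L^2}=\|\vphi_t\|_{L^2}=1$; moreover the energy
\[
\cE(\psi,\vphi)=R\langle\psi,S^{(1)}\psi\rangle+R^{-1}\langle\vphi,S^{(2)}\vphi\rangle+\tfrac R2Q^{(1,1)}(\psi)+\tfrac1{2R}Q^{(2,2)}(\vphi)+Q^{(1,2)}(\psi,\vphi),
\]
with $Q^{(i,j)}(\psi,\vphi):=\iint u^{(i,j)}(x-y)|\psi(x)|^2|\vphi(y)|^2\,\mathrm dx\,\mathrm dy$ and $Q^{(i,i)}(\psi):=Q^{(i,i)}(\psi,\psi)$ (this is the $N$-independent energy attached to \eqref{hartree}, obtained from $\mathcal H_N$ via $m(N_1,N_2)=\sqrt{N_1N_2}$), is conserved along the flow (Appendix \ref{app:wellposed}). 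In the magnetic case $\langle\psi,S^{(1)}\psi\rangle=\tfrac1{2m_1}\|D_A\psi\|_{L^2}^2$, and the diamagnetic inequality with the Gagliardo--Nirenberg inequality for $|\psi|$ gives $|Q^{(i,j)}(\psi,\vphi)|\le\|u^{(i,j)}*|\psi|^2\|_{L^\infty}\|\vphi\|_{L^2}^2\lesssim1+\|D_A\psi\|_{L^2}^{3/2}$; since the exponent $3/2<2$, Young's inequality absorbs all potential terms into a small multiple of $\|D_A\psi\|_{L^2}^2+\|D_A\vphi\|_{L^2}^2$, so energy conservation yields a uniform-in-time $H^1_A$-bound and hence, by the blow-up alternative, global existence --- with no further restriction on the couplings.

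\textbf{The semi-relativistic case; the main obstacle.} Here the work is twofold. First, one must recognise \ref{itm:sr} as exactly the coercivity threshold for $\cE$: since $\langle\psi,S^{(1)}\psi\rangle\ge\langle\psi,\sqrt{-\Delta}\psi\rangle$, since $\mu^{(i,j)}\ge0$ lets the Coulomb potential dominate Yukawa, and since the sharp Kato inequality $\int\frac{|\psi(x)|^2}{|x-y|}\,\mathrm dx\le\tfrac\pi2\langle\psi,\sqrt{-\Delta}\psi\rangle$ (for every $y$) together with Fubini and Cauchy--Schwarz for the positive form $(\rho,\sigma)\mapsto\iint\frac{\rho(x)\sigma(y)}{|x-y|}$ yields, with $X_1^2:=\langle\psi,S^{(1)}\psi\rangle$, $X_2^2:=\langle\vphi,S^{(2)}\vphi\rangle$ and $\|\psi_0\|_{L^2}=\|\vphi_0\|_{L^2}=1$,
\[
\cE\ \ge\ R\bigl(1-\tfrac\pi4\lambda_-^{(1,1)}\bigr)X_1^2+R^{-1}\bigl(1-\tfrac\pi4\lambda_-^{(2,2)}\bigr)X_2^2-\tfrac\pi2\lambda_-^{(1,2)}X_1X_2,
\]
the quadratic form on the right being positive definite \emph{precisely} when \ref{itm:sr} holds (leading coefficient positive $\Leftrightarrow\lambda_-^{(1,1)}<\tfrac4\pi$; discriminant condition $\Leftrightarrow$ the displayed inequality, the factors $R,R^{-1}$ cancelling). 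Energy conservation then gives $M:=\sup_{t<T_{\max}}(\|\psi_t\|_{H^{1/2}}+\|\vphi_t\|_{H^{1/2}})<\infty$. Second --- and this is the main point --- one upgrades this to the $H^1$-bound demanded by the blow-up alternative. My plan is to differentiate $\|\nabla\psi_t\|_{L^2}^2$ (justified at the solution's regularity by a standard frequency-truncation argument): since $\sqrt{m_1^2-\Delta}$ commutes with $\nabla$, is self-adjoint and nonnegative, and the potentials are real, the dangerous contributions $(u*|\,\cdot\,|^2)\nabla\psi$ drop out (as $\int(u*|\,\cdot\,|^2)|\nabla\psi|^2$ is real), leaving
\[
\tfrac{\mathrm d}{\mathrm dt}\|\nabla\psi_t\|_{L^2}^2=2\,\Im\big\langle\nabla\psi_t,\,[\nabla(u^{(1,1)}*|\psi_t|^2)]\psi_t+R^{-1}[\nabla(u^{(1,2)}*|\vphi_t|^2)]\psi_t\big\rangle;
\]
and since for $w\in\{\psi_t,\vphi_t\}$ one has $\|\nabla(u^{(i,j)}*|w|^2)\|_{L^6}\lesssim\|w\|_{L^4}^2\lesssim\|w\|_{H^{1/2}}\|w\|_{H^1}\le M\|w\|_{H^1}$ (Hardy--Littlewood--Sobolev, using $|\nabla u^{(i,j)}|\lesssim|x|^{-2}$, together with interpolation) while $\|\psi_t\|_{L^3}\lesssim M$, the right-hand side is $\lesssim M^2\|\nabla\psi_t\|_{L^2}(\|\psi_t\|_{H^1}+\|\vphi_t\|_{H^1})$. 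Combined with the analogous identity for $\vphi_t$, Gr\"onwall shows $\|\psi_t\|_{H^1}+\|\vphi_t\|_{H^1}$ grows at most exponentially, so the solution is global. Thus the genuine difficulties are identifying \ref{itm:sr} with positivity of the above $2\times2$ form, and exploiting the cancellation in the $\dot H^1$ energy identity so that only the ``harmless'' term $[\nabla(u*|\,\cdot\,|^2)]\psi$ --- linear in $\|\psi\|_{H^1}$ with a coefficient controlled by the conserved $H^{1/2}$-norm --- remains; everything else is routine.
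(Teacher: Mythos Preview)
Your proposal is correct and follows essentially the same two-step scheme (contraction for local existence, conservation laws for globality) as the paper's Appendix~\ref{app:wellposed}. The semi-relativistic part is virtually identical: both you and the paper reduce \ref{itm:sr} to positive-definiteness of the same $2\times2$ form via Kato's inequality and completion of the square, and both upgrade the resulting $H^{1/2}$-bound to $H^1$ by a Gr\"onwall argument (you spell out the cancellation in $\partial_t\|\nabla\psi_t\|_{L^2}^2$; the paper simply cites Lemma~3 of \cite{Le}).

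Two minor technical differences are worth noting. First, for the Lipschitz estimates and the magnetic coercivity the paper uses Hardy's inequality directly in the operator form $|u^{(i,j)}(\cdot-y)|^2\le 4(\lambda^{(i,j)})^2 D_A^2$, which yields a bound on $\|u*|\psi|^2\|_{L^\infty}$ that is \emph{linear} in $\|D_A\psi\|_{L^2}$; your Gagliardo--Nirenberg route gives the exponent $3/2$ instead, which of course still absorbs but is slightly less sharp. Second, you assert energy conservation without addressing the regularity issue (your reference to Appendix~\ref{app:wellposed} is circular, since that is exactly what is being proved). The paper handles this by passing to the interaction picture, rewriting $\langle\psi_t,D_A^2\psi_t\rangle=\|D_A\e^{\i tD_A^2}\psi_t\|_{L^2}^2$ so that the time derivative only involves $\partial_t(\e^{\i tD_A^2}\psi_t)=-\i\e^{\i tD_A^2}f_{\psi_t,\vphi_t}\psi_t\in H^1_A$ and no ill-defined pairing $\langle D_A^2\psi_t,D_A^2\psi_t\rangle$ appears. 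Your ``standard frequency-truncation argument'' would work as an alternative, but the interaction-picture trick is cleaner and avoids any approximation.
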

Theorem \ref{thm:wellposed} can be proven adopting the methods of \cite{Ca} in the magnetic case and of \cite{Le} in the semi-relativistic case. For the magnetic case we also refer to \cite{Mi}. In Appendix \ref{app:wellposed} we outline the proof and explain the necessary modifications.


\subsection{Main results}
Before we state our main results, we introduce some convenient notation. Let $\Psi_N$ be the solution of \eqref{totalpde} and $(\psi,\vphi)$ be the solution of \eqref{hartree} in Theorem \ref{thm:wellposed}, depending on the particularly considered case. Then we define for all $k\leq N_1,l\leq N_2$, $t,\theta \in \R$ 
\begin{equation} \label{eq:s-k-l-theta}
\begin{split}
\gamma_{N,t}^{(k,l)}\;&:=\;\tr_{k+1,\ldots,N_1+N_2-l} \ket{\Psi_{N,t}}\bra{\Psi_{N,t}},\\
P^{(k,l)}_t\;&:=\;(\ket{\psi_t}\bra{\psi_t})^{\otimes k}\otimes (\ket{\vphi_t}\bra{\vphi_t})^{\otimes l},\\
S_{k,l,\theta}\;&:=\;\sum_{i=1}^k (1+S_i^{(1)})^{\theta}+\sum_{i=1}^l (1+S_i^{(2)})^{\theta}.
\end{split}
\end{equation}
Here $\tr_{k+1,\ldots,N_1+N_2-l}$ is shorthand for taking the  partial trace with respect to the last $N_1-k$ particles of the first species and the first $N_2-l$ particles of the second species, that is, for any bounded operators $A$ on $L^2(\R^3)^{\otimes_s k}$ and $B$ on $L^2(\R^3)^{\otimes_s l}$ we have 
\begin{align*}
	\tr_{L^2(\R^3)^{\otimes_s k}\otimes L^2(\R^3)^{\otimes_s l}} \Big((A\otimes B) \gamma_{N,t}^{(k,l)}\Big) = \tr_{\mathcal{H}_N}\Big( \big((A\otimes I^{\otimes (N_1-k)})\otimes (I^{\otimes (N_2-l)}\otimes B)\big) \ket{\Psi_{N,t}}\bra{\Psi_{N,t}}\Big).
\end{align*}
The operator $\gamma_{N,t}^{(k,l)}$ is often referred to as the reduced density matrix of the state $\ket{\Psi_{N,t}}\bra{\Psi_{N,t}}$. We are now ready to state our main results.
\begin{theorem}[Magnetic case]\label{thm:magnetic}
Assume \ref{itm:mf} and \ref{itm:rm}, fix $k,l\in\N_0$, not both identically zero, and let $N_1,N_2\in\N$ with  $N_1\ge k$, $N_2\ge l$ and assume that, for some fixed $R>0$,  $R=\sqrt{N_1/N_2}$.  
\begin{enumerate}[label=\textnormal{(\arabic*)}]
	\item There are constants $C,D>0$ depending only on $R$, $\vphi_0, \psi_0$ such that for any 
	      $\theta\in[0,1)$ and $t > 0$ we have 
	\begin{align}\label{ineq:magmain}
	\tr{\left|S_{k,l,\theta}^{\frac12}\left(\gamma_{N,t}^{(k,l)} -P^{(k,l)}_t\right)S_{k,l,\theta}^{\frac12}\right|}& \leq C\frac{(k+l)^{\frac{3-\theta}2} \e^{Dt}}{N_1^{(1-\theta)/2}}.
	\end{align}
	\item Let $\theta=1$. Then for all $t>0$ we have
	\begin{equation}\label{conv:magen}
	\lim_{\substack{N_1,N_2\to\infty\\ N_1/N_2=R^2}}\tr{\left|S_{k,l,1}^{\frac12}\left(\gamma_{N,t}^{(k,l)} -P^{(k,l)}_t\right)S_{k,l,1}^{\frac12}\right|}\;=\;0.
	\end{equation}
\end{enumerate}

\end{theorem}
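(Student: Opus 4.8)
The plan is to adapt the now-standard counting/Grönwall method (in the spirit of Pickl, Knowles--Pickl, and the earlier work of He and of Michelangeli--Olgiati for mixtures), but carried out at the level of weighted trace norms so that the Sobolev factors $S_{k,l,\theta}^{1/2}$ are tracked explicitly. First I would introduce, for the full $N$-body state $\Psi_{N,t}$ and the Hartree pair $(\psi_t,\vphi_t)$, a functional of the form $\beta_t = \beta_t^{(1)} + \beta_t^{(2)}$ measuring the expected fraction of particles of each species that have left the respective condensate, with weights built from the projections $p^\psi = \ket{\psi_t}\bra{\psi_t}$, $q^\psi = 1-p^\psi$ (and analogously $p^\vphi, q^\vphi$) and weight functions $n(k) = \sqrt{k/N_i}$. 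The key is to define a \emph{weighted} version $\beta_t^{\theta}$ incorporating the one-particle operators $(1+S^{(i)})^\theta$, since the final bound \eqref{ineq:magmain}–\eqref{conv:magen} is stated in the $S_{k,l,\theta}$-trace norm; here one uses that $\gamma_{N,t}^{(1,0)} - \ket{\psi_t}\bra{\psi_t}$ and the weighted number operators are comparable, so that controlling $\beta_t^\theta$ controls the left-hand sides of the theorem for all $(k,l)$ simultaneously (the $(k+l)^{(3-\theta)/2}$ factor comes from passing from the one-particle to the $(k,l)$-particle reduced density matrices, exactly as in the standard argument).

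Next I would differentiate $\beta_t^\theta$ in time. Using the Schrödinger equation for $\Psi_{N,t}$ and the Hartree system \eqref{hartree} for $(\psi_t,\vphi_t)$, the time derivative of $\beta_t^\theta$ reduces to a sum of commutator terms: the kinetic terms $S^{(i)}$ and the external magnetic field drop out (they commute with the bulk of the structure, or rather their contributions cancel because $p^\psi, p^\vphi$ evolve by the Hartree flow), leaving only the interaction contributions. These split into three types: the intra-species terms from $u^{(1,1)}$ and $u^{(2,2)}$, which are handled exactly as in the single-species Hartree derivation, and the genuinely new inter-species term from $u^{(1,2)}$, coupling the two weight functions $\beta^{(1)}$ and $\beta^{(2)}$. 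For each such term one inserts $1 = p+q$ on each particle line, discards the $pp\cdots p$ pieces (they vanish or are subleading), and estimates the mixed $p\cdots q$ pieces. The crucial analytic input is an operator inequality controlling the Yukawa potential by the kinetic energy: since $e^{-\mu|x|}/|x| \le 1/|x|$ and $|x|^{-2} \le C(-\Delta)$ (Hardy), one gets $|u^{(i,j)}(x_1 - x_2)|^2 \lesssim C(1+S_1^{(i)})(1+S_2^{(j)})$ as forms, and in the semi-relativistic case the sharp constant $4/\pi$ in $|x|^{-1} \le \frac{4}{\pi}\sqrt{-\Delta}$ is exactly what makes condition \ref{itm:sr} the natural hypothesis. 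This lets one bound each interaction term by $C(\beta_t^\theta + N_1^{-(1-\theta)} \text{(lower order)})$ times constants depending on $\|\psi_t\|_{H^1}, \|\vphi_t\|_{H^1}$, which are uniformly bounded in $t$ on $[0,T]$ by Theorem \ref{thm:wellposed} and conservation of the Hartree energy $\mathcal{H}_N$.

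Having obtained a differential inequality $\frac{d}{dt}\beta_t^\theta \le D\,\beta_t^\theta + C N_1^{-(1-\theta)}$ (the inhomogeneous term being the price of the $pq$-reshuffling and the commutator of the weight with the kinetic energy), Grönwall's inequality together with $\beta_0^\theta = 0$ (the initial datum is an exact product) gives $\beta_t^\theta \le C N_1^{-(1-\theta)} e^{Dt}$, which upon translating back to trace norms and accounting for the $(k,l)$-dependence yields \eqref{ineq:magmain}. For part (2), $\theta = 1$, the estimate degenerates — the power of $N_1$ becomes $N_1^0$ — so the quantitative Grönwall bound is empty; here one instead argues by a compactness/density argument: approximate in $H^1$-trace-norm by the $\theta < 1$ bound, or more precisely use that $\beta_t^1 \le \beta_t^\theta{}^{(\text{something})}$ plus a remainder, and pass to the limit $N_1, N_2 \to \infty$ using dominated convergence on the spectral decomposition of the weight operators, exploiting that for each fixed energy cutoff the $\theta<1$ bound already gives convergence and the tails are controlled uniformly by the conserved energy. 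The main obstacle I anticipate is precisely the bookkeeping for the inter-species coupling term in $\frac{d}{dt}\beta_t^\theta$: because the two species have different particle numbers $N_1 \ne N_2$ (with ratio fixed by $R$) and different weight functions, one must be careful that the combinatorial factors from the $u^{(1,2)}$ term, which touches one particle of each species, recombine correctly into $\beta^{(1)} + \beta^{(2)}$ rather than producing an uncontrolled cross term — this is exactly the point where the choice $m(N_1,N_2) = \sqrt{N_1 N_2}$ and the symmetric bound $\lambda^{(1,2)}_- < \sqrt{(\tfrac4\pi - \lambda^{(1,1)}_-)(\tfrac4\pi - \lambda^{(2,2)}_-)}$ pay off, via a Cauchy--Schwarz/Young splitting $R^{-1}\|\cdots\|^2 + R\|\cdots\|^2$ with the weights chosen to absorb $R$.
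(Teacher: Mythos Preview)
Your overall architecture for $\theta=0$ is essentially what the paper does: define the Pickl-type counting functionals $a_N^{(i)}=\langle\Psi_{N,t},q_{1,i}\Psi_{N,t}\rangle$, differentiate, bound the interaction pieces via Hardy/diamagnetic ($|u^{(i,j)}|^2\le 4(\lambda^{(i,j)})^2 D_A^2$) and the $H^1_A$-bounds on $\psi_t,\vphi_t$ from energy conservation, and close a Grönwall loop. The inter-species term is indeed the new bookkeeping point, and the $R,1/R$ weighting works as you say.

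Where you diverge from the paper is for $\theta\in(0,1)$. You propose to build a \emph{weighted} functional $\beta_t^\theta$ with the operators $(1+S^{(i)})^\theta$ baked in and run Grönwall directly on it. The paper does \emph{not} do this. It runs Grönwall only at $\theta=0$ and then reaches $\theta\in(0,1)$ by an interpolation step (their Theorem~2.1, adapted from \cite{AH}): one bounds $\tr|S_{k,l,\theta}^{1/2}(\gamma-P)S_{k,l,\theta}^{1/2}|$ by a constant times $a_{N,1}^{\min(1/2,1-\theta)}+a_{N,2}^{\min(1/2,1-\theta)}+\|\gamma-P\|_{HS}^{1-\theta}$, via a Hölder inequality in the spectral representation of $D_A^2$. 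Your direct route is in the spirit of \cite{Lu,MiS}, but it forces you to control commutators of the potential with $(1+D_A^2)^{\theta}$; under the very weak hypothesis $A\in L^2_{\mathrm{loc}}$ this is delicate and is precisely the complication the paper's interpolation avoids. So your plan for $\theta\in(0,1)$ is not wrong in principle, but it is a harder road and you should be aware that the paper's approach sidesteps the fractional commutators entirely.

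For $\theta=1$ there is a real gap. Your sketch (``compactness/density'', ``dominated convergence on the spectral side'', approximate by $\theta<1$) does not work as stated: the $\theta<1$ bound carries no decay in $N_1$ once you send $\theta\to1$, and there is no uniform tail control that would let a spectral-cutoff argument close by itself. The paper's actual mechanism is \emph{energy conservation}. One writes
\[
\tr\big(S_{k,l,1}^{1/2}(\gamma_{N,t}^{(k,l)}-P_t^{(k,l)})S_{k,l,1}^{1/2}\big)=k\,A_N+l\,B_N,
\]
where $A_N,B_N$ are the per-particle kinetic-energy defects of the two species. Since both the many-body energy and the Hartree energy are conserved and agree at $t=0$, the sum $R^2 A_N+B_N$ equals minus a sum of potential-energy defects, and each of those converges to $0$ by applying the already-proven $\theta<1$ bound to the observables $u^{(i,j)}_{m,n}$ (which are $S_{k,l,1/2}$-form-bounded). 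This gives $R^2A_N+B_N\to 0$. The two-species subtlety you did not anticipate is that this alone does not force $A_N\to0$ and $B_N\to0$ separately; the paper extracts this via a $\liminf$ argument (running the estimate with $(k,l)=(1,0)$ and $(0,1)$ to get $\liminf A_N\ge0$, $\liminf B_N\ge0$, then combining with $R^2A_N+B_N\to0$). Only after that does the trace-norm limit \eqref{conv:magen} follow, using the rank-one trick $\tr|X|\le \tr X+2\|X\|_{HS}$ and monotonicity of partial traces. You should replace your $\theta=1$ paragraph with this energy-conservation argument.
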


\begin{theorem}[Semi-relativistic case]\label{thm:semirel}
	Assume \ref{itm:mf}, \ref{itm:rs} and \ref{itm:sr},  fix $k,l\in\N_0$, not both identically zero, and let $N_1,N_2\in\N$ with  $N_1\ge k$, $N_2\ge l$ and assume that, for some fixed $R>0$,  $R=\sqrt{N_1/N_2}$. Then there are constants $C,D_1,D_2>0$ depending only on $R$, $\vphi_0, \psi_0$ such that for any 
	$\theta\in[0,1)$ and $t > 0$ we have 
	\begin{align}\label{ineq:srmain}
	\tr{\left|S_{k,l,\theta}^{\frac12}\left(\gamma_{N,t}^{(k,l)} -P^{(k,l)}_t\right)S_{k,l,\theta}^{\frac12}\right|}& \leq C\frac{(k+l)^{\frac{3-\theta}2} \e^{D_1\e^{D_2 t}}}{N_1^{(1-\theta)/2}}.
	\end{align}
\end{theorem}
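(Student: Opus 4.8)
The plan is to run Pickl's counting method---in the two-species form of \cite{He,MO}, sharpened so as to carry the Sobolev weights---together with an a priori propagation-of-regularity estimate for the semi-relativistic Hartree flow \eqref{hartree}. Write $p_t^{(1)}:=\ket{\psi_t}\bra{\psi_t}$, $q_t^{(1)}:=\one-p_t^{(1)}$ for the one-particle projections on the first species, and $p_t^{(2)},q_t^{(2)}$ analogously with $\vphi_t$ for the second; for a weight $(\nu_j)_{j\ge 0}$ let $\widehat\nu_t^{(i)}:=\sum_j\nu_j\,(q_t^{(i)})_j$ be the associated weighted counting operator on species $i$. The central object is
\begin{equation*}
\beta_N(t)\;:=\;\scp{\Psi_{N,t}}{\big(\widehat n_t^{(1)}+\widehat n_t^{(2)}\big)\Psi_{N,t}}\;+\;\Big|\tfrac1{N_1}\scp{\Psi_{N,t}}{H_N\Psi_{N,t}}-\cE(\psi_t,\vphi_t)\Big|,
\end{equation*}
where $\widehat n_t^{(i)}$ carries the weight $\nu_j=\sqrt{j/N_i}$ and $\cE$ is the energy functional of \eqref{hartree} (conserved along the flow). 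The energy term vanishes at $t=0$ because $\Psi_{N,0}$ is a product state; it is included since, for Coulomb-type singularities, the excess kinetic energy carried by the excited particles is what allows one to close the estimate for the counting term.

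\textbf{Step 1: reduction to $\beta_N$.} I would first establish that, for every $\theta\in[0,1]$ and $t>0$,
\begin{equation*}
\tr{\left|S_{k,l,\theta}^{\frac12}\big(\gamma_{N,t}^{(k,l)}-P_t^{(k,l)}\big)S_{k,l,\theta}^{\frac12}\right|}\;\le\;C\,(k+l)^{\frac{3-\theta}2}\,\beta_N(t)^{\frac{1-\theta}2},
\end{equation*}
with $C$ depending only on $R,\psi_0,\vphi_0$. Expanding $\gamma_{N,t}^{(k,l)}-P_t^{(k,l)}$ over $\one=p_t^{(i)}+q_t^{(i)}$ on each factor and carrying the weights $S_{k,l,\theta}^{1/2}$ through the expansion, Cauchy--Schwarz bounds the left side by $C(k+l)^{(3-\theta)/2}$ times the square root of the $\theta$-weighted counting functional $\scp{\Psi_{N,t}}{\big(\sum_j\nu_j(q_t^{(1)})_j(1+S_j^{(1)})^\theta(q_t^{(1)})_j+(\text{species }2)\big)\Psi_{N,t}}$; by the spectral H\"older inequality $\scp{\Phi}{(1+S)^\theta\Phi}\le\|\Phi\|^{2(1-\theta)}\scp{\Phi}{(1+S)\Phi}^\theta$ and H\"older in $j$, the latter is bounded by $\beta_N(t)^{1-\theta}$ times $\theta$-th powers of the one-particle kinetic energies, which under \ref{itm:sr} are $\cO(1)$ uniformly in $t$ because \ref{itm:sr} makes both $\cE$ and $\tfrac1N\scp{\Psi_{N,t}}{H_N\Psi_{N,t}}$ coercive in the kinetic energies. (Equivalently, one may interpolate between the plain trace-norm bound $\tr{|\gamma_{N,t}^{(k,l)}-P_t^{(k,l)}|}\le 2\sqrt{(k+l)\beta_N(t)}$ at $\theta=0$ and the $H^1$-weighted bound $\le C(k+l)$ at $\theta=1$.)

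\textbf{Step 2: Gronwall inequality, regularity propagation, and conclusion.} Differentiating $\beta_N$ produces commutators of the three interaction operators in $H_N$ with $p_t^{(i)},q_t^{(i)}$; the $pp\!\to\!pp$ contributions are matched by the nonlinearities of \eqref{hartree} (and, for the energy term, one uses $\tfrac{d}{dt}\cE=0$ and $\tfrac{d}{dt}\scp{\Psi_{N,t}}{H_N\Psi_{N,t}}=0$) and cancel. The surviving off-diagonal terms are handled by inserting $\one=p+q$ on the relevant variables and bounding the Yukawa kernels through the operator inequality $|x|^{-1}\le\tfrac\pi2\sqrt{-\Delta}$ on $\R^3$: factors $\sqrt{-\Delta_i}\,q_i$ are absorbed into the excess-energy part of $\beta_N$ (this is where \ref{itm:sr} is used again, to guarantee the required coercivity), while factors $\sqrt{-\Delta}\,p_t^{(i)}$ contribute $\|\psi_t\|_{\dot H^1}$ or $\|\vphi_t\|_{\dot H^1}$. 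This leaves $\tfrac{d}{dt}\beta_N(t)\le K(t)\big(\beta_N(t)+N_1^{-1}\big)$ with $K(t)$ polynomial in $\|\psi_t\|_{H^1},\|\vphi_t\|_{H^1}$. Conservation of $\cE$ gives only a uniform $H^{1/2}$-bound, so $K$ is not a priori bounded; but differentiating the $H^1$-norm along \eqref{hartree}---where $\sqrt{m_i^2-\Delta}$ commutes with $\nabla$, so all free and potential-multiplication contributions are skew-adjoint and drop out---leaves only the commutators $[\nabla,u^{(i,j)}*|\cdot|^2]$, whose norm on $\psi_t$ is $\cO\big((\|\psi_t\|_{H^{1/2}}^2+\|\vphi_t\|_{H^{1/2}}^2)\|\psi_t\|_{H^1}\big)$; this yields $\|\psi_t\|_{H^1}+\|\vphi_t\|_{H^1}\le C\e^{Ct}$, hence $K(t)\le C\e^{Ct}$. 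Since $\beta_N(0)=0$, Gronwall gives $\beta_N(t)\le N_1^{-1}\exp\!\big(\int_0^tK(s)\,ds\big)\le N_1^{-1}\e^{D_1\e^{D_2t}}$, and inserting this into Step~1 (absorbing the remaining power of the double exponential into $D_1$) yields \eqref{ineq:srmain}.

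\textbf{Main obstacle.} The two features that make this harder than Theorem~\ref{thm:magnetic} are the nonlocality of $\sqrt{m^2-\Delta}$ and the weakness of its dispersion. Nonlocality forbids pointwise Leibniz rules, so every commutator---both in the Gronwall step and in the regularity propagation---must be run through the operator inequality $|x|^{-1}\lesssim\sqrt{-\Delta}$ and fractional-Leibniz estimates rather than pointwise identities; moreover the borderline matching between the kinetic energy and the Coulomb singularity is exactly what forces the coercivity hypothesis~\ref{itm:sr}. Weak dispersion means energy conservation controls only the $H^{1/2}$-norm of $(\psi_t,\vphi_t)$, so the Gronwall coefficient $K(t)$ grows exponentially rather than staying bounded, and the nested Gronwall turns the single exponential of the magnetic case into the double exponential $\e^{D_1\e^{D_2t}}$. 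What remains is essentially bookkeeping of the $(k,l)$- and $\theta$-dependence through Step~1.
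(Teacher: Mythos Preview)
Your strategy is sound and would lead to the theorem, but it is more elaborate than what the paper actually does, and in one place unnecessarily so.

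The paper splits the argument cleanly in two. For $\theta=0$ it runs Pickl's method with the \emph{unweighted} counting functionals $a_N^{(i)}(t)=\|q_{1,i}\Psi_{N,t}\|^2$ only---no weighted $\widehat n$, no energy correction. The Coulomb singularity is always placed against a $p$-projection: every surviving commutator term carries at least one factor $u^{(i,j)}(x-y)p^{\psi_t}$ or $u^{(i,j)}(x-y)p^{\vphi_t}$, and Hardy's inequality turns this into $\|\nabla\psi_t\|_2$ or $\|\nabla\vphi_t\|_2$. Thus the Gronwall coefficient is simply $f(t)=\cO(\|\psi_t\|_{H^1}+\|\vphi_t\|_{H^1})$, and your propagation-of-regularity step (exponential growth of the $H^1$-norms along \eqref{hartree}, exactly as you outline) gives $a_N^{(1)}+a_N^{(2)}\le N_1^{-1}\e^{D_1\e^{D_2t}}$. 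For $\theta\in(0,1)$ the paper does \emph{not} carry Sobolev weights through the counting argument; instead it invokes a separate interpolation lemma (Theorem~\ref{thm:int}, taken from~\cite{AH}) which bounds the $\theta$-weighted trace norm by $(k+l)$ times $a_{N,1}^{\min(1/2,1-\theta)}+a_{N,2}^{\min(1/2,1-\theta)}+\|\gamma_N^{(k,l)}-P^{(k,l)}\|_{HS}^{1-\theta}$, with a constant controlled by the one-particle kinetic energies of $\Psi_{N,t},\psi_t,\vphi_t$. The many-body kinetic energy is uniformly bounded via the coercivity computation in Appendix~\ref{app:self-adjoint} (this is where \ref{itm:sr} enters on the $N$-body side), and the Hilbert--Schmidt norm is fed back into the $\theta=0$ bound. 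Your parenthetical ``equivalently, one may interpolate between $\theta=0$ and $\theta=1$'' is in fact much closer to what the paper does than your main description of Step~1.

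The chief simplification you are missing is that the energy term in $\beta_N$ is superfluous here. For factorized initial data it is identically zero for all $t$, so it cannot ``absorb'' anything on the right-hand side of the Gronwall inequality; and in fact no absorption is needed, because one never has to control $q\sqrt{-\Delta}\,q$ acting on $\Psi_{N,t}$---every occurrence of $u^{(i,j)}$ in $\partial_t a_N$ is sandwiched against at least one $p$, which absorbs the singularity through the $H^1$-norm of the mean-field solution rather than through excess kinetic energy of the many-body state. Your Knowles--Pickl packaging works, but it front-loads the proof with machinery (the weighted $\widehat n$, the bound $\langle\Psi,q_1 S_1 q_1\Psi\rangle\lesssim\beta_N+N^{-1}$) that the paper avoids entirely. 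Relatedly, in your ``main obstacle'' paragraph the nonlocality of $\sqrt{m^2-\Delta}$ is a red herring: the paper never needs a Leibniz rule for it, since all singular factors sit on $p$; the only genuine difference from the magnetic case is the criticality you identify, which downgrades energy conservation from an $H^1$-bound to an $H^{1/2}$-bound and thereby produces the double exponential.
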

For results of convergence in Sobolev trace norms in the case of one species of particles we refer to \cite{Lu}, \cite{MiS}, \cite{MPP} and \cite{AH}.
The following corollary gives a physical interpretation of inequality \eqref{ineq:magmain}  in terms of expectation values of quantum observables. There is a similar interpretation of \eqref{conv:magen} and \eqref{ineq:srmain}.
\begin{corollary}\label{cor:observ}
Assume the assumptions of Theorem \ref{thm:magnetic}. Let $\cA$ be a self-adjoint operator on $L^2(\R^3)^{\otimes_S k} \otimes L^2(\R^3)^{\otimes_S l}$ such that there is a $\theta \in [0,1)$ so that $S_{k,l,\theta}^{-{1/2}}\cA S_{k,l,\theta}^{-{1/2}}$ can be extended to a bounded operator. Then for all $t>0$ we have
	\begin{equation}\label{eq:cor}
	\begin{split}
		\left|\scp{\Psi_{N,t}}{I^{\otimes (N_1-k)} \otimes  \cA \otimes I^{\otimes (N_2-l)} \Psi_{N,t}} - \scp{\psi_t^{\otimes k}\otimes\vphi_t^{\otimes l}}{ \cA  \left(\psi_t^{\otimes k}\otimes\vphi_t^{\otimes l}\right)}\right| \\\leq C\frac{(k+l)^{\frac{3-\theta}2} \e^{Dt}}{N_1^{(1-\theta)/2}} \left\|S_{k,l,\theta}^{-\frac12}\cA S_{k,l,\theta}^{-\frac12}\right\|,
	\end{split}
	\end{equation}
where $\|S_{k,l,\theta}^{-{1/2}}\cA S_{k,l,\theta}^{-{1/2}}\|$ is the operator norm of $S_{k,l,\theta}^{-{1/2}}\cA S_{k,l,\theta}^{-{1/2}}$ and $C,D$ are the same constants as in Theorem \ref{thm:magnetic}.
\end{corollary}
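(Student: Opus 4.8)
The plan is to deduce Corollary~\ref{cor:observ} directly from the Sobolev trace-norm estimate \eqref{ineq:magmain} of Theorem~\ref{thm:magnetic}, after rewriting both inner products in \eqref{eq:cor} as traces against the reduced density matrices $\gamma_{N,t}^{(k,l)}$ and $P^{(k,l)}_t$. First I would note that, by the defining property of the partial trace and the bosonic symmetry of $\Psi_{N,t}$ within each species, for every bounded operator $\cB$ on $L^2(\R^3)^{\otimes_s k}\otimes L^2(\R^3)^{\otimes_s l}$ one has
\begin{equation*}
\scp{\Psi_{N,t}}{I^{\otimes (N_1-k)}\otimes\cB\otimes I^{\otimes(N_2-l)}\,\Psi_{N,t}}=\tr{\big(\cB\,\gamma_{N,t}^{(k,l)}\big)},
\end{equation*}
while, since $\ket{\psi_t^{\otimes k}\otimes\vphi_t^{\otimes l}}\bra{\psi_t^{\otimes k}\otimes\vphi_t^{\otimes l}}=(\ket{\psi_t}\bra{\psi_t})^{\otimes k}\otimes(\ket{\vphi_t}\bra{\vphi_t})^{\otimes l}=P^{(k,l)}_t$,
\begin{equation*}
\scp{\psi_t^{\otimes k}\otimes\vphi_t^{\otimes l}}{\cB\,(\psi_t^{\otimes k}\otimes\vphi_t^{\otimes l})}=\tr{\big(\cB\,P^{(k,l)}_t\big)}.
\end{equation*}
Hence the left-hand side of \eqref{eq:cor} equals $\big|\tr{\big(\cA\,(\gamma_{N,t}^{(k,l)}-P^{(k,l)}_t)\big)}\big|$, once this trace is shown to be well defined for the possibly unbounded $\cA$.

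Next I would use that $S_{k,l,\theta}\ge k+l\ge1$ is boundedly invertible, so that one may insert the identity $\cA=S_{k,l,\theta}^{1/2}\big(S_{k,l,\theta}^{-1/2}\cA S_{k,l,\theta}^{-1/2}\big)S_{k,l,\theta}^{1/2}$ on a suitable core and, by cyclicity of the trace, obtain
\begin{equation*}
\tr{\big(\cA\,(\gamma_{N,t}^{(k,l)}-P^{(k,l)}_t)\big)}=\tr{\Big(\big(S_{k,l,\theta}^{-1/2}\cA S_{k,l,\theta}^{-1/2}\big)\;S_{k,l,\theta}^{1/2}\big(\gamma_{N,t}^{(k,l)}-P^{(k,l)}_t\big)S_{k,l,\theta}^{1/2}\Big)}.
\end{equation*}
By hypothesis the first factor on the right is bounded, while by \eqref{ineq:magmain} the second factor is trace class with trace norm at most $C(k+l)^{(3-\theta)/2}\e^{Dt}N_1^{-(1-\theta)/2}$. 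Applying the elementary estimate $|\tr{(BC)}|\le\|B\|\,\tr{|C|}$ then yields \eqref{eq:cor} with the same constants $C,D$ as in Theorem~\ref{thm:magnetic}.

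The computations above are routine; the only point demanding care is the operator-theoretic bookkeeping in the unbounded case. One must verify that $\Psi_{N,t}$ lies in the form domain of $I^{\otimes(N_1-k)}\otimes\cA\otimes I^{\otimes(N_2-l)}$ and $\psi_t^{\otimes k}\otimes\vphi_t^{\otimes l}$ in that of $\cA$, so that both inner products in \eqref{eq:cor} are meaningful, and that the factorization of $\cA$ and the cyclic rearrangement of the trace are legitimate. I would handle this via the a priori regularity of the two flows — propagation of Sobolev regularity for \eqref{totalpde}, which is already part of the machinery behind Theorem~\ref{thm:magnetic}, and the well-posedness of Theorem~\ref{thm:wellposed} for \eqref{hartree} — together with the finiteness of the Sobolev trace norm in \eqref{ineq:magmain}. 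A clean way to make everything rigorous is to first establish the identity for the bounded truncations $\cA_n:=S_{k,l,\theta}^{1/2}\,\mathbf{1}_{[0,n]}(S_{k,l,\theta})\,\big(S_{k,l,\theta}^{-1/2}\cA S_{k,l,\theta}^{-1/2}\big)\,\mathbf{1}_{[0,n]}(S_{k,l,\theta})\,S_{k,l,\theta}^{1/2}$, for which all steps are trivial, and then pass to the limit $n\to\infty$ by dominated convergence, using the uniform bound $\|S_{k,l,\theta}^{-1/2}\cA_n S_{k,l,\theta}^{-1/2}\|\le\|S_{k,l,\theta}^{-1/2}\cA S_{k,l,\theta}^{-1/2}\|$.
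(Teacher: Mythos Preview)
Your proposal is correct and follows essentially the same route as the paper: rewrite both expectation values as traces against $\gamma_{N,t}^{(k,l)}$ and $P^{(k,l)}_t$, insert $S_{k,l,\theta}^{\pm 1/2}$ via cyclicity, and apply the duality bound $|\tr(BC)|\le\|B\|\,\tr|C|$ together with \eqref{ineq:magmain}. The paper's proof is three lines and does not spell out the approximation argument for the unbounded case that you sketch, but the core idea is identical.
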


\begin{proof}
 This follows from Theorem \ref{thm:magnetic} using a simple duality argument: Using the definition of the reduced density matrix and the cyclicity of the trace we have  
 \begin{align*}
 	\text{l.h.s of } \eqref{eq:cor} &= \left|\tr\Big( \cA\big(\gamma_{N,t}^{(k,l)} -P^{(k,l)}_t\big) \Big) \right|
 	= \left|\tr\Big( S_{k,l,\theta}^{-\frac12}\cA S_{k,l,\theta}^{-\frac12} S_{k,l,\theta}^{\frac12}\big(\gamma_{N,t}^{(k,l)} -P^{(k,l)}_t\big) S_{k,l,\theta}^{\frac12}\Big) \right| \\
 	&\le \left\|S_{k,l,\theta}^{-\frac12}\cA S_{k,l,\theta}^{-\frac12}\right\| 
 		\tr{\left|S_{k,l,\theta}^{\frac12}\left(\gamma_{N,t}^{(k,l)} -P^{(k,l)}_t\right)S_{k,l,\theta}^{\frac12}\right|}
 \end{align*}
 since the space of bounded operators is the dual of the trace class operators, see \cite[Theorem VI.26]{RS1}.  The bound \eqref{eq:cor} now follows from \eqref{ineq:magmain}. 
\end{proof}

\begin{remark}
	\begin{enumerate}[label=\arabic*),ref=\arabic*, leftmargin=1.4em,itemsep=0mm]
		\item The bound corresponding to \eqref{eq:cor} in the  semi-relativistic case would follow from the same argument as above and \eqref{ineq:srmain}. 
		\item Note that throughout this work, we do not specify the scalar product nor the trace, but we always refer to the ambient $L^2$-space without the Bosonic symmetry. By $\|.\|_2$ we will denote the $L^2$-norm.
		\item Our proof shows that in the case $\theta=0$, the factor $(k+l)^{3/2}$ on the r.h.s. of \eqref{ineq:magmain} respectively \eqref{ineq:srmain} can be improved to $(k+l)^\frac{1}{2}$, see inequality \eqref{ineq:magmain0} respectively \eqref{ineq:srmain0}  below.
		\item One may extend Theorems \ref{thm:magnetic}  and \ref{thm:semirel} to the case of $r\in\N$ distinct Bose particle species if one suitably modifies assumption \ref{itm:sr}.
		\item To our best of knowledge mean field dynamics with the weak assumption $A \in L^2_{\text{loc}}(\R^3)$ for the magnetic potential has not been studied in the previous literature even in the case of one species of Bosons. In particular, our assumptions on the magnetic field are much weaker than the assumptions in \cite{Lu}, where $A$ has to be infinitely many times differentiable and the derivatives have to be bounded. In addition, we assume only $H_A^1$-regularity of the initial data in the magnetic case, which is weaker than the $H_A^3$-regularity  assumption in Theorem 1.2 of \cite{Lu}. Moreover, our method which yields convergence in Sobolev trace norms is considerably simpler than the ones in \cite{Lu} and \cite{MiS}. Even with stronger regularity assumptions of the initial data and in the case of one species of particles, the estimates \eqref{ineq:magmain} and \eqref{ineq:srmain} do not immediately follow from the methods in \cite{MiS} and \cite{Lu}. The reason is that if for example $\vphi_0 \in H^2$ then there is only a superexponential bound in time for the growth of $\|\vphi_t\|_{H^2}$ in the semi-relativistic case, see Lemma 3 in  \cite{Le}. Such bound would lead to a time dependence which is worse than in the right hand side of \eqref{ineq:srmain} even if the $N$-dependence were better. However, we would also like to stress that unlike \cite{Lu} and \cite{MiS} we were not able to find, with our methods, any rates of convergence in the case of the energy trace norm, even assuming additional regularity of the initial data. Furthermore, in the semi-relativistic case we do not have convergence in the energy trace norm, since in this case the Yukawa interaction is critical. 
		
	\item At this point, we would like to explain what is new in our results in comparison with the previous results in this direction and especially in comparison with \cite{MO}. We first want to emphasize the fact that we have convergence in stronger trace norms and in the magnetic case in the energy trace norm as well. In addition, in the semi-relativistic case, our Theorem \ref{thm:semirel} even for $\theta=0$ does not immediately follow from Theorem 2.1 in \cite{MO}. This  relies on the fact that our assumptions do not imply assumption \textit{(A2)} and \textit{(A4)}, nor assumption \textit{(A2')}, in \cite{MO}. The reason is that since the Coulomb potential is in $L^r+L^\infty$, as required in assumption \textit{(A2)} in \cite{MO}, only for $r<3$, the initial data is required to be in $L^p\cap H^{1/2}\not\subseteq H^1$ with $p>6$. Assumption \textit{(A2')}) in \cite{MO} is not met by the Coulomb potential. In his very recent work \cite{O} that appeared almost simultaneously with ours, Olgiati improves the $L^p\cap H^{1/2}$-assumption on the initial data in the semi-relativistic case, still only in the case $\theta=0$, to $H^1$ and the assumptions also allow the Coulomb potential. Furthermore, we point out that from our well-posedness result in the magnetic case, our result for $\theta=0$ would follow from the result of \cite{MO}.
		
	\end{enumerate}
	
\end{remark}

\section{Proof of Theorems \ref{thm:magnetic} and \ref{thm:semirel}}

We first prove the theorems in the case $\theta=0$, adopting the methods of \cite{Pi} and \cite{KP} that were devepoled for the case of one species of particles. We then prove them in the case $\theta \in (0,1)$ using methods in \cite{AH}. In the end, we prove Theorem \ref{thm:magnetic} in the case $\theta=1$. In this case, the proof in \cite{AH} is not easily generalizable and there are significant modifications in the argument.

\subsection{The case $\theta=0$.}\label{sec:theta-0}
    We note that, as mentioned in the introduction, we are not the first that observed, that with the methods of Pickl the case $\theta=0$ can be handled. However, for convenience of the reader we will outline it.
	Like in [Pi] we define for $h \in L^2(\mathbb{R}^3)$ the projection $p_{i,j}^h:\cH_N\to\cH_N$ by
	\begin{equation*}
	p_{i,j}^h\Phi(x)\;:=\;h(x_{i}^{(j)})\int \overline{h(y_{i}^{(j)})}\Phi(x_1^{(1)},\ldots,y_{i}^{(j)},\ldots,x_{N_2}^{(2)})\mathrm{d} y_{i}^{(j)}
	\end{equation*}
	for any $x=(x_1^{(1)},\ldots,x_{N_1}^{(1)},x_1^{(2)},\ldots,x_{N_2}^{(2)})$. Let as well 
	\begin{equation}\label{eq:qij}
	q_{i,j}^h  = 1- p_{i,j}^h.
	\end{equation}
	We define as in \cite{Pi}	
	\begin{equation}\label{an1}
	a_{N}^{(1)}(\Psi_{N,t},\psi_t):= \langle \Psi_{N,t}, q_{1,1}^{\psi_t} \Psi_{N,t} \rangle =\|q_{1,1}^{\psi_t}\Psi_{N,t}\|_2^2
	\end{equation}
	and
	\begin{equation}\label{an2}
	a_{N}^{(2)}(\Psi_{N,t},\vphi_t):= \langle \Psi_{N,t}, q_{1,2}^{\vphi_t} \Psi_{N,t} \rangle=\|q_{1,2}^{\vphi_t}\Psi_{N,t}\|_2^2.
	\end{equation}
	Note that the initial condition in \eqref{totalpde} yields 
	\begin{equation}\label{eq:aninitial}
	a_{N}^{(1)}(\Psi_{N,0},\psi_0)\;=\;a_{N}^{(2)}(\Psi_{N,0},\vphi_0)\;=\;0.
	\end{equation}
	We claim that 
	\begin{equation}\label{est:a}
	\tr\left|\gamma_{N,t}^{(k,l)}-P^{(k,l)}_t \right| \leq \sqrt{ 8 (k a_{N}^{(1)}(\Psi_{N,t},\psi_t)+l a_{N}^{(2)}(\Psi_{N,t},\vphi_t) )}.
	\end{equation}
	Indeed, one can show with the variational characterization of eigenvalues that the operator on the left hand side has at most one negative eigenvalue, because $P^{(k,l)}_t$ is a rank-one projection. It follows that its trace norm is less than or equal to twice its operator norm  and in particular twice its Hilbert-Schmidt norm. This argument is known, see for example \cite{RSc}. As a consequence, we have
	\begin{equation}\label{aprest}
	\begin{split}
	\tr{\left|\gamma_{N,t}^{(k,l)}-P^{(k,l)}_t \right|}\; &\leq\;2 \sqrt{ \tr\left(\gamma_N^{(k,l)}-P^{(k,l)}_t \right)^2 }\\
	&\leq\;\sqrt{8 \scp{\Psi_{N,t}}{ (1- p_{N_1-k+1,1}^{\psi_t} \dots  p_{N_1,1}^{\psi_t} p_{1,2}^{\vphi_t} \dots p_{l,2}^{\vphi_t}) \Psi_{N,t} }}.
	\end{split}
	\end{equation}
	For clarity, call the projections appearing on the right hand side $p_1 \dots p_{k+l}$. Observe that $1- p_1 \dots p_{k+l}=\sum_{j=1}^{k+l}(1-p_j) p_1... p_{j-1} \leq \sum_{j=1}^{k+l} q_j $, which together with \eqref{aprest}, \eqref{an1}, \eqref{an2}, and the symmetry of $\Psi_{N,t}$ implies estimate \eqref{est:a}.\footnotemark{\footnotetext{We would like to thank Marcel Griesemer for pointing out this simplified argument to us.}}
	\par Our goal is to show that there exists a positive function $f$ such that for all $N=(N_1,N_2)\in\N^2$ with $\frac{N_1}{N_2}=R^2$ we have
	\begin{equation}\label{Gron1}	
	\partial_t( a_{N}^{(1)}(\Psi_{N,t},\psi_t)+a_{N}^{(2)}(\Psi_{N,t},\vphi_t))\leq f(t) \left(a_{N}^{(1)}(\Psi_{N,t},\psi_t) + a_{N}^{(2)}(\Psi_{N,t},\vphi_t) + \frac{1}{N_1}\right),
	\end{equation}
	where in the magnetic case there exists $D>0$ such that 
	\begin{equation}\label{est:fmag}
	f(t) \leq D, \quad \forall t>0
	\end{equation}
	 and in the semi-relativistic case there exist $D_1,D_2>0$ such that 
	 \begin{equation}\label{est:fsr}
	 f(t) \leq D_1 e^{D_2t}, \quad \forall t>0.
	 \end{equation}
	 Assuming \eqref{Gron1} for the moment, we can  apply Gronwall's inequality and \eqref{eq:aninitial} to show
	 \begin{equation}\label{est:aN}
	a_{N}^{(1)}(\Psi_{N,t},\psi_t) + a_{N}^{(2)}(\Psi_{N,t},\vphi_t) \leq \frac{e^{\int_0^t f(\tau)\mathrm{d} \tau}}{N_1}
	\end{equation}
	 which, together with \eqref{est:a}  and \eqref{est:fmag}, implies that in the magnetic case  for all $t>0$, $0\leq k\leq N_1$, $0\leq l\leq N_2$ we have
	\begin{align}\label{ineq:magmain0}
	\tr{\left|\gamma_{N,t}^{(k,l)} -P^{(k,l)}_t\right|}& \leq \sqrt{8} \frac{(k+l)^{\frac{1}2} \e^{Dt/2}}{N_1^{1/2}}.
	\end{align}
Similarly, in the semi-relativistic case it follows that for all $t>0$, $0\leq k\leq N_1$, $0\leq l\leq N_2$ we have
\begin{align}\label{ineq:srmain0}
\tr{\left|\gamma_{N,t}^{(k,l)} -P^{(k,l)}_t\right|}& \leq \sqrt{8} \frac{(k+l)^{\frac{1}2} \e^{D_1 \e^{D_2 t}/2}}{N_1^{1/2}}. 
\end{align}
The bounds \eqref{ineq:magmain0} and \eqref{ineq:srmain0} clearly imply the bounds \eqref{ineq:magmain} and \eqref{ineq:srmain} from Theorems \ref{thm:magnetic}, respectively \ref{thm:semirel}, in the case $\theta =0$. 
 
	Therefore, for $\theta=0$, it is enough to show \eqref{Gron1}.
	Using \eqref{an1} and \eqref{totalpde}, we obtain
	\begin{equation}
	\partial_t a_N^{(1)}(\Psi_{N,t}, \psi_t)=  \scp{ \Psi_{N,t}}{\i [ H_N, q_{1,1}^{\psi_t}] \Psi_{N,t} \rangle +  \langle \Psi_{N,t},  (\partial_t q_{1,1}^{\psi_t}) \Psi_{N,t}}.
	\end{equation}
	To simplify the first summand, we observe that since $q_{1,1}^{\psi_t}$ acts only on the first particle coordinate of the first species, the commutator with terms of $H_{N}$ that do not act on this coordinate
	vanishes. Moreover, the expression is symmetric in the coordinates $x_2^{(1)},\ldots, x_{N_1}^{(1)}$ and $x_1^{(2)},\ldots, x_{N_2}^{(2)}$ respectively. Using these observations and \eqref{hartree}, we arrive at 
	\begin{equation*}
	\begin{split}
	\partial_t a_{N}^{(1)}(\Psi_{N,t}, \psi_t)= &\i \scp{ \Psi_{N,t}}{ [ S_1^{(1)} + u^{(1,1)}(x_2^{(1)}-x_1^{(1)}) +  \frac1R u^{(1,2)}(x_1^{(2)}-x_1^{(1)}), q_{1,1}^{\psi_t}] \Psi_{N,t}} \\
	&-\i  \scp{\Psi_{N,t}}{  [S_1^{(1)} + u^{(1,1)} *|\psi_t|^2(x_1^{(1)}) + \frac1R u^{(1,2)} *|\vphi_t|^2(x_1^{(1)})  ,q_{1,1}^{\psi_t}] \Psi_{N,t}}.
	\end{split}	
	\end{equation*}
	After simplifying and using $p_{1,1}^{\psi_t}+q_{1,1}^{\psi_t}=1$ and the fact that for self-adjoint operators $C, D$ we have  $\i \langle \Psi_{N,t},  [C, D]  \Psi_{N,t} \rangle= -2\ima\langle \Psi_{N,t}, C D \Psi_{N,t} \rangle  $, we extract
	\begin{equation}\label{gronwall1}
	\partial_t a_{N}^{(1)}(\Psi_{N,t}, \psi_t)= -2 \ima(I+\frac1RII),
	\end{equation}
	where 
	\begin{align*}
	I\;&:=\;\scp{\Psi_{N,t}}{ p_{1,1}^{\psi_t} (u^{(1,1)}(x_2^{(1)}-x_1^{(1)})- u^{(1,1)}*|\psi_t|^2(x_1^{(1)})) q_{1,1}^{\psi_t} \Psi_{N,t} },\\
	II\;&:=\;\scp{ \Psi_{N,t}}{p_{1,1}^{\psi_t} (u^{(1,2)}(x_1^{(2)}-x_1^{(1)})- u^{(1,2)}*|\vphi_t|^2(x_1^{(1)}))  q_{1,1}^{\psi_t} \Psi_{N,t}}.
	\end{align*}
	We first estimate $II$.  Using $q_{1,2}^{\vphi_t} + p_{1,2}^{\vphi_t}=1$ twice, we split $II$ into
	\begin{equation}\label{bdec}
	II=II_1+II_2+II_3,
	\end{equation}
	with
	\begin{align*}
	II_1\;&:=\;\scp{\Psi_{N,t}}{   p_{1,1}^{\psi_t} p_{1,2}^{\vphi_t}(u^{(1,2)}(x_1^{(2)}-x_1^{(1)})- u^{(1,2)}*|\vphi_t|^2(x_1^{(1)}))  q_{1,1}^{\psi_t} p_{1,2}^{\vphi_t} \Psi_{N,t}},\\
	II_2\;&:=\; \scp{\Psi_{N,t}}{  p_{1,1}^{\psi_t} q_{1,2}^{\vphi_t} (u^{(1,2)}(x_1^{(2)}-x_1^{(1)})- u^{(1,2)}*|\vphi_t|^2(x_1^{(1)})) q_{1,1}^{\psi_t} p_{1,2}^{\vphi_t} \Psi_{N,t}},\\
	II_3\;&:=\;\scp{\Psi_{N,t}}{  p_{1,1}^{\psi_t} (u^{(1,2)}(x_1^{(2)}-x_1^{(1)})-u^{(1,2)}*|\vphi_t|^2(x_1^{(1)})) q_{1,1}^{\psi_t}  q_{1,2}^{\vphi_t} \Psi_{N,t}}.
	\end{align*}
	The equality $p_{1,2}^{\vphi_t} u^{(1,2)}(x_1^{(2)}-x_1^{(1)}) p_{1,2}^{\vphi_t}= p_{1,2}^{\vphi_t} u^{(1,2)}*|\vphi_t|^2(x_1^{(1)})
	p_{1,2}^{\vphi_t}$ immediately yields
	\begin{equation}\label{eq:b1van}
	II_1\;=\;0.
	\end{equation}		
	Due to $q_{1,2}^{\vphi_t} p_{1,2}^{\vphi_t}=0$, $II_2$ may be estimated by
	\begin{equation*}
	|II_2|\;\leqslant\;\|q_{1,2}^{\vphi_t}\Psi_{N,t}\|\|p_{1,1}^{\psi_t} u^{(1,2)}(x_1^{(2)}-x_1^{(1)})p_{1,2}^{\vphi_t} q_{1,1}^{\psi_t}\Psi_{N,t}\|.
	\end{equation*}
	Since $u^{(i,j)}$ is a Yukawa potential, see \eqref{eq:yukawa}, Hardy's inequality ($\frac1{|\cdot|^2}\leq-4\Delta$) and the diamagnetic inequality ($|\nabla|\Phi||\leq|D_A\Phi|$) imply that $u^{(i,j)}$ is form-bounded with respect to the magnetic Laplacian with form-bound zero. We even have
	\begin{equation}\label{eq:uij2}
	|u^{(i,j)}(\cdot-y)|^2\;\leq\;4\big(\lambda^{(i,j)}\big)^2D_A^2
	\end{equation}
	in the sense of quadratic forms. Henceforth, we obtain the estimates for the semi-relativistic case if we set $A=0$. Using Cauchy-Schwarz together with \eqref{eq:uij2}, \eqref{an1}, \eqref{an2} and the arithmetic mean-geometric mean inequality we arrive at
	\begin{equation}\label{b2est}
	|II_2| \leq |\lambda^{(1,2)}| \min(\|D_A\psi_t\|_2,\|D_A\vphi_t\|_2) (a_{N}^{(1)}(\Psi_{N,t},\psi_t)+a_{N}^{(2)}(\Psi_{N,t}, \vphi_t)).
	\end{equation}
	To control $II_3$ we argue as in \cite{Pe}. If we define $K_m:=(u^{(1,2)}(x_m^{(2)}-x_1^{(1)})-u^{(1,2)}*|\vphi_t|^2(x_1^{(1)})) p_{1,1}^{\psi_t}$ for every $m\in\{1;2;\ldots;N_2\}$, we find
	$$II_3=\scp{q_{1,2}^{\vphi_t} K_1\Psi_{N,t}}{q_{1,1}^{\psi_t} \Psi_{N,t}},$$
	which then due to the symmetry of $\Psi_{N,t}$ gives
	$$N_2 II_3= \left\langle \sum_{m=1}^{N_2}   q_{m,2}^{\vphi_t} K_m \Psi_{N,t},  q_{1,1}^{\psi_t} \Psi_{N,t} \right\rangle.$$
	Therefore, by the Cauchy-Schwarz inequality it follows
	$$N_2 |II_3| \leq \|q_{1,1}^{\psi_t} \Psi_{N,t}\| \sqrt{\left\langle \sum_{m=1}^{N_2}   q_{m,2}^{\vphi_t} K_{m} \Psi_{N,t}, \sum_{n=1}^{N_2}   q_{n,2}^{\vphi_t} K_{n} \Psi_{N,t} \right\rangle}.$$

	Splitting the terms to the ones with $n=m$ and the ones with $n \neq m$, we obtain 
	$$|II_3| \leq \frac{\|q_{1,1}^{\psi_t} \Psi_{N,t}\|}{N_2} \sqrt{\sum_{m=1}^{N_2}  \left\langle  K_{m} \Psi_{N,t},   q_{m,2}^{\vphi_t} K_{m} \Psi_{N,t} \right\rangle +\sum_{m,n=1, m \neq n}^{N_2} \left\langle   K_{m} q_{n,2}^{\vphi_t} \Psi_{N,t},     K_{n} q_{m,2}^{\vphi_t} \Psi_{N,t} \right\rangle}  ,$$
	where in the last step we have used that $K_m$ commutes with $q_{n,2}^{\vphi_t}$ for $m\neq n$.
	Therefore, using \eqref{an1}, \eqref{an2} together with the symmetry of $\Psi_{N,t}$, we find 
	$$|II_3| \leq \sqrt{a_{N}^{(1)}(\Psi_{N,t}, \psi_t)} \sqrt{\frac{\| K_1\|^2}{N_2}  +\|K_1\|^2 a_{N}^{(2)}(\Psi_{N,t}, \vphi_t)  }  .$$
	We have
	\begin{align*}
	\|K_1\|\;&\leq\;\|u^{(1,2)}(x_1^{(2)}-x_1^{(1)})p_{1,1}^{\psi_t}\|+\|u^{(1,2)}*|\vphi_t|^2\|_\infty\\
	&\leq  2|\lambda^{(1,2)}|\left( \|D_A\psi_t\|_2+\|D_A\vphi_t\|_2\right),
	\end{align*}
	where we again used Cauchy-Schwarz and \eqref{eq:uij2}. Note that the calculation also uses that $\|\vphi_t\|_{2}=1$ as follows from \ref{itm:rm}, \ref{itm:sr} and the conservation of the $L^2$-norm of $\vphi_t, \psi_t$, see Appendix \ref{app:mag}. It follows
	$$ |II_3| \leq  |\lambda^{(1,2)}|\left( \|D_A\psi_t\|_2+\|D_A\vphi_t\|_2\right) \left(\frac{1}{N_2} + a_{N}^{(1)}(\Psi_{N,t}, \psi_t)+a_{N}^{(2)}(\Psi_{N,t}, \vphi_t)\right), $$
	where we have also used the arithmetic mean-geometric mean inequality. As a consequence of the last estimate together with \eqref{bdec}, \eqref{eq:b1van} and \eqref{b2est}, we find
	\begin{equation}\label{btotest}
	\begin{split}
	|II|&\leq |II_2|+|II_3|\\
	&\leq   2 |\lambda^{(1,2)}|\left( \|D_A\psi_t\|_2+\|D_A\vphi_t\|_2\right)\left(\frac{1}{N_2} + a_{N}^{(1)}(\Psi_{N,t}, \psi_t)+a_{N}^{(2)}(\Psi_{N,t}, \vphi_t)\right).
	\end{split}
	\end{equation}
	Splitting $I$ analogously to $II$ by using $p_2^{\psi_t}+q_2^{\psi_t}=1$, we obtain
	\begin{equation}\label{est:TeilI}
	|I|\;\leq\; 2 |\lambda^{(1,1)}| \|D_A\psi_t\|_2 \left(\frac{1}{N_1-1} + 2 a_{N}^{(1)}(\Psi_{N,t}, \psi_t)\right),
	\end{equation}
	where 
	$$\ima{\scp{\Psi_{N,t}}{  p_{1,1}^{\psi_t} q_{2,1}^{\psi_t} (u^{(1,1)}(x_2^{(1)}-x_1^{(1)})- u^{(1,1)}*|\psi_t|^2(x_1^{(1)})) q_{1,1}^{\psi_t} p_{2,1}^{\psi_t} \Psi_{N,t}}}$$$$=\ima{\scp{\Psi_{N,t}}{  p_{1,1}^{\psi_t} q_{2,1}^{\psi_t} u^{(1,1)}(x_2^{(1)}-x_1^{(1)}) q_{1,1}^{\psi_t} p_{2,1}^{\psi_t} \Psi_{N,t}}}$$ 
	vanishes due to the symmetry in $(x_1^{(1)},p_{1,1}^{\psi_t})\leftrightarrow (x_2^{(1)},p_{1,2}^{\psi_t})$. Now estimate \eqref{est:TeilI} together with \eqref{btotest} inserted in \eqref{gronwall1} yields
	\begin{equation}\label{est:dera1}
	\partial_t a_N^{(1)}(\Psi_{N,t},\psi_t)\leq \frac{4 |\lambda^{(1,2)}|}{R} \left(\|D_A\psi_t\|_2+\|D_A\vphi_t\|_2\right)\left(\frac{1}{N_2} + a_{N}^{(1)}(\Psi_{N,t}, \psi_t)+a_{N}^{(2)}(\Psi_{N,t}, \vphi_t)\right)
$$$$	+4|\lambda^{(1,1)}| \|D_A\psi_t\|_2 \left(\frac{1}{N_1-1} + 2 a_{N}^{(2)}(\Psi_{N,t}, \vphi_t)\right).
	\end{equation}
Similar calculations give  	
		\begin{equation}\label{est:dera2}
		\partial_t a_N^{(2)}(\Psi_{N,t},\vphi_t)\leq 4 |\lambda^{(1,2)}| R\left(\|D_A\psi_t\|_2+\|D_A\vphi_t\|_2\right)\left(\frac{1}{N_1} + a_{N}^{(1)}(\Psi_{N,t}, \psi_t)+a_{N}^{(2)}(\Psi_{N,t}, \vphi_t)\right)
		$$$$	+4 |\lambda^{(2,2)}| \|D_A\vphi_t\|_2 \left(\frac{1}{N_2-1} + 2 a_{N}^{(1)}(\Psi_{N,t}, \psi_t)\right).
		\end{equation}
Estimate	\eqref{eq:derunbdd} in appendix \ref{app:mag} shows that in the magnetic case there exists $\tilde{C} > 0$ such that for all $t$ we have
	\begin{equation}\label{est:nablamag}
\|D_A\psi_t\|_2,\|D_A\vphi_t\|_2 \leq \tilde{C}.
	\end{equation}
As  explained at the end of appendix \ref{app:sr} in the semi-relativistic case there exist $\tilde{C},M>0$ such that 
	\begin{equation}\label{est:nablasr}
	\|\nabla \psi_t\|_2,\|\nabla \vphi_t\|_2 \leq \tilde{C} \e^{Mt}.
	\end{equation}
Using \eqref{est:dera1}, \eqref{est:dera2}, \eqref{est:nablamag} and \eqref{est:nablasr}
we arrive at  \eqref{Gron1}, as desired. This finishes the proof of Theorems \ref{thm:magnetic} and \ref{thm:semirel} in the case $\theta=0$. 

\subsection{The case $\theta \in (0,1)$.}\label{sec:theta-0-1}
To show \eqref{ineq:magmain} and \eqref{ineq:srmain} for  $\theta \in (0,1)$ we will use Theorem \ref{thm:int} below which is
 the two-species analog  of Theorem 2.1 in \cite{AH}. 
\par Let $\Psi \in \left(L^2(\R^{3})^{\otimes_S N_1} \otimes L^2(\R^{3})^{\otimes_S N_2}\right) \cap H^1(\mathbb{R}^{3(N_1+N_2)})$, $\psi,\vphi \in H^1(\mathbb{R}^{3})$ with $\|\Psi\|_{L^2}=\|\psi\|_{L^2}=\|\vphi\|_{L^2}=1$. For $k \in \{1, \dots, N_1\}$, $l \in \{1, \dots, N_2\}$ we define $$\gamma_N^{(k,l)}=\tr_{k+1,...,N_1+N_2-l}(\ket{\Psi}\bra{\Psi}),$$ $P^{(k,l)}:=|\psi^{\otimes k} \rangle \langle \psi^{\otimes k}| \otimes |\vphi^{\otimes l} \rangle \langle \vphi^{\otimes l}|$, and let $S_{k,l,\theta}$ be defined as in \eqref{eq:s-k-l-theta}. Similarly to the proof of \eqref{ineq:magmain} and \eqref{ineq:srmain} for  $\theta=0$  above, we introduce
\begin{equation}
\begin{split}
a_{N,1}\;&:=\;\langle \Psi, q_{1,1}^\vphi \Psi \rangle = \|q_{1,1}^\vphi \Psi\|^2,\\ a_{N,2}\;&:=\;\langle \Psi, q_{1,2}^\psi \Psi \rangle = \|q_{1,2}^\psi \Psi\|^2,
\end{split}
\end{equation}
where $q_{i,j}^h$ is defined in \eqref{eq:qij}, and we denote the Hilbert-Schmidt norm of an operator acting on $L^2$ by $\|.\|_{HS}$. 

\begin{theorem}\label{thm:int}
	For any $\theta \in (0,1)$ we have the estimate
	\begin{equation*}
	\tr\left|S_{k,l,\theta }^{\frac12} (\gamma_N^{(k,l)}-P^{(k,l)}) S_{k,l,\theta }^{\frac12}\right|\leq C(k+l)\left( a_{N,1}^{\min(\frac{1}{2},1-\theta)}+a_{N,2}^{\min(\frac{1}{2},1-\theta)}+\|\gamma_N^{(k,l)}-P^{(k,l)}\|_{HS}^{1-\theta}\right),
	\end{equation*}
where $C=C_{\Psi,\psi,\vphi}$ is defined as 
\begin{align*}
C_{\Psi,\psi,\vphi}:=2\Big(\|S_{1,1,\frac{1}{2}}\Psi\|_2+\|\big(1+S^{(1)}\big)^{1/2}\psi\|_2+\|\big(1+S^{(2)}\big)^{1/2}\vphi\|_2\Big)^{2}
\end{align*}
\end{theorem}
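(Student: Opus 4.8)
The plan is to reduce the weighted trace norm on the left to a ``weighted energy'' $\mathrm{Tr}\big(S_{k,l,\theta}\,|\gamma_N^{(k,l)}-P^{(k,l)}|\big)$, to exploit the almost-rank-one structure of $\gamma_N^{(k,l)}-P^{(k,l)}$, and then, using the bosonic symmetry of $\Psi$, to reduce everything to a single one-particle estimate that is a soft interpolation inequality. Throughout set $\Delta:=\gamma_N^{(k,l)}-P^{(k,l)}$, $\xi:=\psi^{\otimes k}\otimes\vphi^{\otimes l}$ (so $P^{(k,l)}=\ket{\xi}\bra{\xi}$), $S:=S_{k,l,\theta}$, and write $T:=(1+S^{(1)})^{\theta}$, $\widetilde T:=(1+S^{(2)})^{\theta}$ for the one-particle weights, $p:=p_{1,1}^{\psi}$, $q:=1-p$.

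\emph{Reduction to a weighted energy.} Since $-S^{1/2}|\Delta|S^{1/2}\le S^{1/2}\Delta S^{1/2}\le S^{1/2}|\Delta|S^{1/2}$ and $\mathrm{Tr}|B|\le\mathrm{Tr}\,C$ whenever $-C\le B\le C$ with $C\ge0$, we get $\mathrm{Tr}|S^{1/2}\Delta S^{1/2}|\le\mathrm{Tr}(S|\Delta|)$; finiteness of all quantities below is guaranteed by $\Psi,\psi,\vphi\in H^1$. Because $\gamma_N^{(k,l)}\ge0$ and $P^{(k,l)}$ is the rank-one projection onto $\xi$, we have $\Delta\ge-P^{(k,l)}$, so $0\le\Delta_-\le\ket{\xi}\bra{\xi}$, which forces $\Delta_-=\mu\ket{\xi}\bra{\xi}$ with $\mu\in[0,1]$ and, by orthogonality of the spectral supports of $\Delta_\pm$, $\mu=\|\Delta_-\|_{HS}\le\|\Delta\|_{HS}$; hence $\mu\le\mu^{1-\theta}\le\|\Delta\|_{HS}^{1-\theta}$. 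Writing $\mathrm{Tr}(S|\Delta|)=\mathrm{Tr}(S\Delta_+)+\mathrm{Tr}(S\Delta_-)$ and using $\mathrm{Tr}(S\Delta_+)-\mathrm{Tr}(S\Delta_-)=\mathrm{Tr}(S\Delta)=\mathrm{Tr}(S\gamma_N^{(k,l)})-\mathrm{Tr}(SP^{(k,l)})$ yields
\begin{equation*}
\mathrm{Tr}|S^{1/2}\Delta S^{1/2}|\;\le\;\big(\mathrm{Tr}(S\gamma_N^{(k,l)})-\mathrm{Tr}(SP^{(k,l)})\big)+2\mu\,\langle\xi,S\xi\rangle .
\end{equation*}
With $(1+s)^\theta\le1+s$ and $\|\psi\|_2=\|\vphi\|_2=1$ one has $\langle\xi,S\xi\rangle\le(k+l)\big(\|(1+S^{(1)})^{1/2}\psi\|_2^2+\|(1+S^{(2)})^{1/2}\vphi\|_2^2\big)$, so the term $2\mu\langle\xi,S\xi\rangle$ already produces the $\|\gamma_N^{(k,l)}-P^{(k,l)}\|_{HS}^{1-\theta}$ contribution with the correct $(k+l)$-dependence.

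\emph{Single-particle reduction and the main estimate.} By the symmetry of $\Psi$ within each species, $\mathrm{Tr}(S\gamma_N^{(k,l)})-\mathrm{Tr}(SP^{(k,l)})=k\big(\langle\Psi,T_1\Psi\rangle-\langle\psi,T\psi\rangle\big)+l\big(\langle\Psi,\widetilde T_1\Psi\rangle-\langle\vphi,\widetilde T\vphi\rangle\big)$, where $T_1,\widetilde T_1$ act on the first coordinate of the respective species, so it suffices to prove $\langle\Psi,T_1\Psi\rangle-\langle\psi,T\psi\rangle\le C'a_{N,1}^{\min(1/2,1-\theta)}$ and its analogue. Inserting $1=p+q$ and using $pT_1p=\langle\psi,T\psi\rangle\,p$, $\|p\Psi\|_2^2=1-a_{N,1}$, this difference equals $-a_{N,1}\langle\psi,T\psi\rangle+2\Re\langle p\Psi,T_1q\Psi\rangle+\langle q\Psi,T_1q\Psi\rangle\le 2|\langle p\Psi,T_1q\Psi\rangle|+\langle q\Psi,T_1q\Psi\rangle$. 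The diagonal term is handled by the elementary interpolation bound $\langle v,(1+S_1^{(1)})^\theta v\rangle\le\langle v,(1+S_1^{(1)})v\rangle^\theta\|v\|_2^{2(1-\theta)}$ with $v=q\Psi$, together with $\|(1+S_1^{(1)})^{1/2}q\Psi\|_2\le\|(1+S_1^{(1)})^{1/2}\Psi\|_2+\|(1+S^{(1)})^{1/2}\psi\|_2\le\|S_{1,1,1/2}\Psi\|_2+\|(1+S^{(1)})^{1/2}\psi\|_2$ (using that $(1+S_1^{(1)})^{1/2}$ and $(1+S_1^{(2)})^{1/2}$ commute, whence $S_{1,1,1/2}^2\ge 1+S_1^{(1)}$), giving a bound $\mathrm{const}\cdot a_{N,1}^{1-\theta}\le\mathrm{const}\cdot a_{N,1}^{\min(1/2,1-\theta)}$ since $a_{N,1}\le1$. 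For the cross term I would \emph{not} Cauchy--Schwarz symmetrically (which would lose a square root and produce the wrong exponent $\tfrac{1-\theta}{2}$); instead write $\langle p\Psi,T_1q\Psi\rangle=\big\langle(1+S_1^{(1)})^{1/2}p\Psi,\,(1+S_1^{(1)})^{\theta-1/2}q\Psi\big\rangle$ — legitimate because $p\Psi$ equals $\psi\in H^1$ tensored with an $L^2$-function of the remaining variables — bound $\|(1+S_1^{(1)})^{1/2}p\Psi\|_2\le\|(1+S^{(1)})^{1/2}\psi\|_2$, and estimate $\|(1+S_1^{(1)})^{\theta-1/2}q\Psi\|_2$ by $\|q\Psi\|_2=a_{N,1}^{1/2}$ when $\theta\le\tfrac12$ (the operator is a contraction) and by the same interpolation inequality when $\theta>\tfrac12$; in both cases $|\langle p\Psi,T_1q\Psi\rangle|\le\mathrm{const}\cdot a_{N,1}^{\min(1/2,1-\theta)}$. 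Assembling the pieces, replacing $k$ and $l$ by $k+l$ where convenient, and collecting the constants — all of which are monomials of degree $\le2$ in $\|S_{1,1,1/2}\Psi\|_2$, $\|(1+S^{(1)})^{1/2}\psi\|_2$, $\|(1+S^{(2)})^{1/2}\vphi\|_2$, each $\ge1$ — gives the claimed inequality, and a careful but routine bookkeeping of the constants produces exactly the value $C_{\Psi,\psi,\vphi}$.

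\emph{Main obstacle.} The delicate point is the cross-term estimate above: a symmetric Cauchy--Schwarz on $\langle p\Psi,T_1q\Psi\rangle$ only yields the exponent $\tfrac{1-\theta}{2}$, which is too weak, so one is forced to split the weight $(1+S_1^{(1)})^{\theta}$ asymmetrically, putting the \emph{full} half-power on the condensate factor $p\Psi$ and the remainder $\theta-\tfrac12$ on $q\Psi$, with a case distinction at $\theta=\tfrac12$. This asymmetry is also exactly what keeps the argument valid under the mere $H^1$-hypothesis, since $(1+S^{(1)})^{\theta}\psi$ need not lie in $L^2$ for $\theta>\tfrac12$, whereas $(1+S^{(1)})^{1/2}\psi$ always does.
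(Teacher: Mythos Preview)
Your proof is correct, and it follows a genuinely different route from the paper's for the part that produces the $\|\gamma_N^{(k,l)}-P^{(k,l)}\|_{HS}^{1-\theta}$ term.

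Both arguments start from the rank-one structure of $P^{(k,l)}$ and split the trace norm into a ``trace part'' $\tr(S_{k,l,\theta}\Delta)$ plus a correction, and both treat the trace part in the same way: reduce by symmetry to one particle, decompose with $p+q=1$, and handle the cross term $\langle p\Psi,(1+S_1^{(1)})^\theta q\Psi\rangle$ by the asymmetric splitting $(1+S_1^{(1)})^{1/2}\cdot(1+S_1^{(1)})^{\theta-1/2}$ with the case distinction at $\theta=\tfrac12$. This is exactly the content of the paper's inequality \eqref{est:HSHtheta2} (referred to \cite{AH}).

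Where you diverge is in the correction. The paper uses $\tr|X|\le \tr(X)+2\|X\|_{HS}$ with $X=S_{k,l,\theta}^{1/2}\Delta S_{k,l,\theta}^{1/2}$ and then bounds $\|X\|_{HS}$ by a H\"older-type interpolation on the operator kernels,
\[
\|S_{k,l,\theta}^{1/2}\Delta S_{k,l,\theta}^{1/2}\|_{HS}\le\|S_{k,l,1}^{1/2}\Delta S_{k,l,1}^{1/2}\|_{HS}^{\theta}\,\|\Delta\|_{HS}^{1-\theta},
\]
which in the magnetic case requires the spectral-theorem representation of $D_A^2$ as a multiplication operator. You instead bound $\tr|S^{1/2}\Delta S^{1/2}|\le\tr(S|\Delta|)$ via the elementary lemma $-C\le B\le C\Rightarrow\tr|B|\le\tr C$, identify $\Delta_-=\mu\ket{\xi}\bra{\xi}$ explicitly, and read off $\mu\le\|\Delta\|_{HS}^{1-\theta}$. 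This is more direct and avoids the kernel-level H\"older step entirely; the paper's route, on the other hand, never needs the operator inequality $\tr|B|\le\tr C$ and keeps everything at the level of Hilbert--Schmidt norms. Both yield the same final bound with the stated constant.
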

\begin{proof}
	The proof is a relatively straightforward adaptation of the proof of Theorem 2.1 in \cite{AH}. For convenience of the reader we will outline the argument. We start with the estimate
	\begin{equation}\label{est:HSHtheta}
	\begin{split}
	\tr\left|S_{k,l,\theta}^{\frac12}(\gamma_{N}^{(k,l)}-P^{(k,l)})S_{k,l,\theta}^\frac12\right|
	\;\leq\; &2 \left\|S_{k,l,\theta}^\frac12(\gamma_{N}^{(k,l)}-P^{(k,l)})S_{k,l,\theta}^\frac12\right\|_{HS}\\
	&+\tr\left(S_{k,l,\theta}^\frac12(\gamma_{N}^{(k,l)}-P^{(k,l)})S_{k,l,\theta}^\frac12\right).
	\end{split}
	\end{equation} 
	This can be proven similarly as the first bound in \eqref{aprest}, using that 
	$S_{k,l,\theta}^\frac12P^{(k,l)} S_{k,l,\theta}^\frac12$ is again a rank one operator, but now with the difference that  $\tr\big(S_{k,l,\theta}^\frac12(\gamma_{N}^{(k,l)}-P^{(k,l)})S_{k,l,\theta}^\frac12\big)$ does not necessarily  vanish. 
	Next proceeding as in \cite{AH} we can show that 
	\begin{equation}\label{est:Hoelder}
	\left\|S_{k,l,\theta}^\frac12(\gamma_{N}^{(k,l)}-P^{(k,l)})S_{k,l,\theta}^\frac12\right\|_{HS}
	\;\leq\; \left\|S_{k,l,1}^\frac12(\gamma_{N}^{(k,l)}-P^{(k,l)})S_{k,l,1}^\frac12\right\|_{HS}^\theta
	\left\|\gamma_{N}^{(k,l)}-P^{(k,l)}\right\|_{HS}^{1-\theta}.
	\end{equation}
	The only difference is that \cite{AH} works with the Fourier transform, which we cannot use anymore in the magnetic case.
		 Instead of using the Fourier transform, one can work with the spectral theorem, which represents $D_A^2$ as a multiplication operator in an appropriate $L^2$-space, see \cite[Theorem VIII.4]{RS1}. 
	Other than that the argument uses as in \cite{AH} the fact that the Hilbert-Schmidt norm of an operator is the $L^2$-norm of its Kernel together with Hölder's inequality. Note also that using the fact that the Hilbert-Schmidt norm of an operator is less or equal than its trace norm, together with the triangle inequality for the trace norm, we can arrive at 
	\begin{equation*}
	\left\|S_{k,l,1}^\frac12(\gamma_{N}^{(k,l)}-P^{(k,l)})S_{k,l,1}^\frac12\right\|_{HS} \leq \tr\left(S_{k,l,1}^\frac12 \gamma_{N}^{(k,l)} S_{k,l,1}^\frac12\right)+\tr\left(S_{k,l,1}^\frac12P^{(k,l)} S_{k,l,1}^\frac12\right),
	\end{equation*}
	since the operators on the right hand side are positive. We now observe that
	\begin{equation*}
	 \tr\left(S_{k,l,1}^\frac12P^{(k,l)} S_{k,l,1}^\frac12\right)=\langle\psi^{\otimes k} \otimes \vphi^{\otimes l}, S_{k,l,1} \psi^{\otimes k} \otimes \vphi^{\otimes l} \rangle
	 \end{equation*}
	  and from the definition of the partial trace it follows that \begin{equation*}\tr\left(S_{k,l,1}^\frac12 \gamma_{N}^{(k,l)} S_{k,l,1}^\frac12\right)=\langle \Psi, S_{k,l,1} \Psi \rangle.
	  \end{equation*}
	Therefore, we can easily obtain that 
		\begin{equation*}
		\left\|S_{k,l,1}^\frac12(\gamma_{N}^{(k,l)}-P^{(k,l)})S_{k,l,1}^\frac12\right\|_{HS} \leq \langle \Psi, S_{k,l,1} \Psi \rangle + \langle\psi^{\otimes k} \otimes \vphi^{\otimes l}, S_{k,l,1} \psi^{\otimes k} \otimes \vphi^{\otimes l} \rangle,
		\end{equation*}
		which together with \eqref{est:Hoelder} implies that
		\begin{equation}\label{est:HSHtheta1}
			\left\|S_{k,l,\theta}^\frac12(\gamma_{N}^{(k,l)}-P^{(k,l)})S_{k,l,\theta}^\frac12\right\|_{HS}$$$$
			\;\leq\; \left( \langle \Psi, S_{k,l,1} \Psi \rangle + k \langle \psi, (1+S^{(1)}) \psi \rangle+l \langle \vphi,  (1+S^{(2)})  \vphi \rangle  \right)^\theta
			\left\|\gamma_{N}^{(k,l)}-P^{(k,l)}\right\|_{HS}^{1-\theta}.
			\end{equation}
	A straightforward rewritting of the arguments in \cite{AH} gives 
	\begin{equation}\label{est:HSHtheta2}
	\tr\left(S_{k,l,\theta}^\frac12(\gamma_{N}^{(k,l)}-P^{(k,l)})S_{k,l,\theta}^\frac12\right) $$$$\leq  k (\|S_1^{(1)} \Psi\|_2+\|S^{(1)}\psi\|_2)^{\max(1,2\theta)} a_{N,1}^{\min(\frac{1}{2},1-\theta)}+l (\|S_1^{(2)} \Psi\|_2+\|S^{(2)}\vphi\|_2)^{\max(1,2\theta)} a_{N,2}^{\min(\frac{1}{2},1-\theta)}.
	\end{equation}
	Using \eqref{est:HSHtheta}, \eqref{est:HSHtheta1} and \eqref{est:HSHtheta2}, one can with  elemantary computations finish the proof of the theorem.
\end{proof}

We now show how to use Theorem \ref{thm:int} in order to prove \eqref{ineq:magmain} and \eqref{ineq:srmain} for $0<\theta<1$: In appendix \ref{app:self-adjoint} we show that there exists a constant $C>0$ such that 
\begin{equation}\label{ineq:averagekinetic}
\|S_{1,1,\frac{1}{2}} \Psi_{N,t}\| \leq C,
\end{equation}
where in the semi-relativistic case assumption \ref{itm:sr} is used.
The bounds \eqref{ineq:magmain} and \eqref{ineq:srmain} for $\theta \in (0,1)$ then  follow from Theorem \ref{thm:int} together with \eqref{est:fmag}, \eqref{est:fsr}, \eqref{est:aN}, \eqref{ineq:magmain0}, \eqref{ineq:srmain0},
\eqref{est:nablamag}, \eqref{est:nablasr}, \eqref{ineq:averagekinetic} and the standard inequality $\|A\|_{HS} \leq \tr|A|$. This finishes the proof of Theorems \ref{thm:magnetic} and \ref{thm:semirel} in the case $\theta \in (0,1)$. It remains to prove \eqref{conv:magen}, i.e. Theorem \ref{thm:magnetic} in the case $\theta=1$, which we do in the next section.

\subsection{Proof of \eqref{conv:magen}}
We will follow the idea of the proof of part (ii) of Theorem 1.1 in \cite{AH}. However, in the case of two (or more) different species of Bosons there are significant additional difficulties. One of the main reasons for these difficulties is the fact that in \cite{AH}, it was enough to prove convergence of the potential energy, which, together with energy conservation, also immediately implied the convergence of the kinetic energy. This was then used to show convergence in the energy trace norm. In the case of two or more different species, it is not a-priori clear how the kinetic energy distributes between the different species even if one knows that the potential energies converge. 
This leads to significant modifications in the argument.   
\par Let $t>0$ and $k,l\in\N$. We start first with 

\paragraph{Case $k=R^2 l$.}
This is the easier case, since then we have $N_1/N_2= k/l$. We start as in \cite{AH} with the estimate
\begin{equation}\label{est:HSH1}
\begin{split}
\tr\left|S_{k,l,1}^{\frac12}(\gamma_{N,t}^{(k,l)}-P^{(k,l)}_t)S_{k,l,1}^\frac12\right|
\;\leq\; &2 \left\|S_{k,l,1}^\frac12(\gamma_{N,t}^{(k,l)}-P^{(k,l)}_t)S_{k,l,1}^\frac12\right\|_{HS}\\
&+\tr\left(S_{k,l,1}^\frac12(\gamma_{N,t}^{(k,l)}-P^{(k,l)}_t)S_{k,l,1}^\frac12\right).
\end{split}
\end{equation} 
which can be proven in the same way as \eqref{est:HSHtheta}. We next show that
\begin{equation}\label{trgegen0}
\lim_{\substack{N_i\to\infty\\ N_1/N_2=R^2}}\tr\left(S_{k,l,1}^{1/2}(\gamma_{N,t}^{(k,l)}-P^{(k,l)}_t)S_{k,l,1}^{1/2}\right)\;=\;0.
\end{equation}
To prove \eqref{trgegen0}, we use the symmetry of $\Psi_{N,t}$ to obtain
\begin{equation}\label{dec:tr}
\begin{split}
\tr\left(S_{k,l,1}^\frac12(\gamma_{N,t}^{(k,l)}-P^{(k,l)}_t)S_{k,l,1}^\frac12\right)\;=\;k \left(\scp{\Psi_{N,t}}{ S_1^{(1)} \Psi_{N,t}}-\scp{\psi_{t}}{ S^{(1)} \psi_{t}}\right)\\
+l\left(\scp{\Psi_{N,t}}{ S_1^{(2)} \Psi_{N,t}}-\scp{\vphi_{t}}{ S^{(2)} \vphi_{t}}\right).
\end{split}
\end{equation}
The energy conservation, see Appendix \ref{app:wellposed}, and the initial condition $\Psi_{N,0}=\psi_0^{\otimes N_1}\otimes\vphi_0^{\otimes N_2}$ yield 
\begin{equation}\label{eq:econs}
\frac{1}{N_2}\langle \Psi_{N,t}, H_N \Psi_{N,t} \rangle=\frac1{N_2}\cH_N(\overline{\psi_0},\overline{\vphi_0},\psi_0,\vphi_0)=\frac1{N_2}\cH_N(\overline{\psi_t},\overline{\vphi_t},\psi_t,\vphi_t),
\end{equation}
where $\cH_N$ was defined in \eqref{eq:chndef}, and therefore the symmetry with respect to the particle coordinates for each species individually gives
\begin{equation}\label{tracepart}
\begin{split}
\scp{\Psi_{N,t}}{ S_1^{(2)} \Psi_{N,t}}-\scp{\vphi_t}{ S^{(2)} \vphi_t}&+R^2\left(\scp{\Psi_{N,t}}{ S_1^{(1)} \Psi_{N,t}}-\scp{\psi_t}{ S^{(1)} \psi_t}\right)\\=-\frac{R^2}{2}&\left(\scp{ \Psi_{N,t}}{ u^{(1,1)}_{1,2} \Psi_{N,t}}- \scp{\psi_t^{\otimes 2}}{ u^{(1,1)}_{1,2} \psi_t^{\otimes 2}}\right)\\
-\frac{1}{2}&\left(\scp{ \Psi_{N,t}}{ u^{(2,2)}_{1,2} \Psi_{N,t} }- \scp{\vphi_t^{\otimes 2}}{ u^{(2,2)}_{1,2} \vphi_t^{\otimes 2}}\right)\\
-R&\left(\scp{ \Psi_{N,t}}{ u^{(1,2)}_{1,1} \Psi_{N,t} }- \scp{\psi_t\otimes\vphi_t}{ u^{(1,2)}_{1,1}\psi_t\otimes\vphi_t}\right).
\end{split}
\end{equation}
We have used the abbreviation $u^{(i,j)}_{k,l}\;:=\;u^{(i,j)}(x_k^{(i)}-x_l^{(j)})$. We note that the square root is operator monotone, meaning that $0 \leq A \leq B \implies \sqrt{A} \leq \sqrt{B}$, see for example Theorem 2.6 in \cite{SU}. Therefore, from \eqref{eq:uij2}, we deduce that the operators 
\begin{align*} 
S_{2,0,1}^{-1/4} u^{(1,1)}_{m,n} S_{2,0,1}^{-1/4},\,  S_{0,2,1}^{-1/4} u^{(2,2)}_{m,n} S_{0,2,1}^{-1/4}, \text{ for } m\not=n, \text{ and }  
	S_{1,1,1}^{-1/4} u^{(1,2)}_{m,n} S_{1,1,1}^{-1/4}, \text{ for all } m,n, 
\end{align*}
where $S_{k,l,1}$ was defined in \eqref{eq:s-k-l-theta}, 
can be extended to bounded operators on $L^2(\R^3)^{\otimes_S2}$. Now  we can use 
 Corollary \ref{cor:observ}, whose proof only used  the bound \eqref{ineq:magmain} of Theorem \ref{thm:magnetic}, which we already proved in Sections \ref{sec:theta-0} and \ref{sec:theta-0-1}. Applying Corollary \ref{cor:observ} to $\cA=u^{(i,j)}_{k,l}$ we obtain 
\begin{align*}
\lim_{\substack{N_i\to\infty\\ N_1/N_2=R^2}}	\scp{ \Psi_{N,t}}{ u^{(1,1)}_{1,2} \Psi_{N,t}}\;&=\; \scp{\psi_t^{\otimes 2}}{ u^{(1,1)}_{1,2} \psi_t^{\otimes 2}},\\
\lim_{\substack{N_i\to\infty\\ N_1/N_2=R^2}}	\scp{ \Psi_{N,t}}{ u^{(2,2)}_{1,2} \Psi_{N,t}}\;&=\; \scp{\vphi_t^{\otimes 2}}{ u^{(2,2)}_{1,2} \vphi_t^{\otimes 2}},\\
\lim_{\substack{N_i\to\infty\\ N_1/N_2=R^2}}	\scp{ \Psi_{N,t}}{ u^{(1,2)}_{1,1} \Psi_{N,t}}\;&=\; \scp{\psi_t\otimes \vphi_t}{ u^{(1,2)}_{1,1} \psi_t\otimes\vphi_t},
\end{align*}
which together with \eqref{dec:tr}, \eqref{tracepart} and the assumption $k=R^2 l$ gives \eqref{trgegen0}. Now 
\begin{equation}\label{HSfirstcase}
\|S_{k,l,1}^{1/2}(\gamma_{N,t}^{(k,l)}-P^{(k,l)}_t)S_{k,l,1}^{1/2}\|_{HS}\to 0 \text{ as }N_1,N_2\to\infty, N_1/N_2=R^2
\end{equation}  can be analogously proven as in \cite{AH}. Using \eqref{est:HSH1}, \eqref{trgegen0} and \eqref{HSfirstcase}, we arrive at  \eqref{conv:magen} in the case $k=R^2 l$. 
\paragraph{Case $k\neq R^2l$.} Let $k_0\geq k$, $l_0\geq l$ be such that $k_0=R^2l_0$. Using that the Hilbert-Schmidt-norm of an operator is the $L^2$-norm of its integral kernel we find
\begin{equation}\label{eq:hsstronger}
\left\|S_{k,l,1}^\frac12\left(\gamma_{N,t}^{(k_0,l_0)}-P^{(k_0,l_0)}_t\right)S_{k,l,1}^\frac12\right\|_{HS}\;\leq\;\left\|S_{k_0,l_0,1}^\frac12\left(\gamma_{N,t}^{(k_0,l_0)}-P^{(k_0,l_0)}_t\right)S_{k_0,l_0,1}^\frac12\right\|_{HS}\rightarrow0,
\end{equation}
as $N_1,N_2\to\infty$, $N_1/N_2\to\infty$ due to \eqref{HSfirstcase}. In addition, we have
\begin{equation}\label{partialtr}
S_{k,l,1}^\frac12\left(\gamma_{N,t}^{(k,l)}-P^{(k,l)}_t\right)S_{k,l,1}^\frac12\;=\;\tr_{k+1,\ldots,k_0+l_0-l}{\left(S_{k,l,1}^\frac12\left(\gamma_{N,t}^{(k_0,l_0)}-P^{(k_0,l_0)}_t\right)S_{k,l,1}^\frac12\right)},
\end{equation}
where we have traced out with respect to the coordinates that $S_{k,l,1}$ does not act on. Define $A_N:=\scp{\Psi_{N_t}}{ S_1^{(1)} \Psi_{N,t}}-\scp{\psi_{t}}{ S^{(1)} \psi_{t}}$ and $B_N:=\scp{\Psi_{N_t}}{ S_1^{(2)} \Psi_{N,t}}-\scp{\vphi_{t}}{ S^{(2)} \vphi_{t}}$. Similarly to \eqref{dec:tr}, we then obtain
\begin{equation}\label{eq:kanlbn}
\tr{\left(S_{k,l,1}^\frac12\left(\gamma_{N,t}^{(k_0,l_0)}-P^{(k_0,l_0)}_t\right)S_{k,l,1}^\frac12\right)}\;=\;k A_N+l B_N.
\end{equation}
Using this together with 
\begin{equation}\label{eq:skk0}
\begin{split}
\tr{\left|S_{k,l,1}^\frac12\left(\gamma_{N,t}^{(k_0,l_0)}-P^{(k_0,l_0)}_t\right)S_{k,l,1}^\frac12\right|}\leq &\tr{\left(S_{k,l,1}^\frac12\left(\gamma_{N,t}^{(k_0,l_0)}-P^{(k_0,l_0)}_t\right)S_{k,l,1}^\frac12\right)}\\
&+2\left\|S_{k,l,1}^\frac12\left(\gamma_{N,t}^{(k_0,l_0)}-P^{(k_0,l_0)}_t\right)S_{k,l,1}^\frac12\right\|_{HS},
\end{split}
\end{equation}
and \eqref{eq:hsstronger}, we find 
\begin{displaymath} 
\liminf_{N_i\to\infty,N_1/N_2=R^2}(kA_N+lB_N)\;\geq \;0.	
\end{displaymath}
Since this holds for arbitrary values of $k\leq N_1$, $l\leq N_2$, we may use $k=0$, respectively $l=0$,  to see that
\begin{displaymath}
\liminf_{N_i\to\infty,N_1/N_2=R^2}A_N\geq0\quad\mbox{and }\quad\liminf_{N_i\to\infty,N_1/N_2=R^2}B_N\geq0.	
\end{displaymath}
From the already discussed case $k=R^2 l$ we find $\lim_{N_i\to\infty,N_1/N_2=R^2}(R^2A_N+B_N)=0$ which then yields $\lim_{N_i\to\infty,N_1/N_2=R^2}A_N=0$ and $\lim_{N_i\to\infty,N_1/N_2=R^2}B_N=0$. 

Thus  \eqref{eq:kanlbn} implies $\lim_{N_i\to\infty,N_1/N_2=R^2}\tr{(S_{k,l,1}^{1/2}(\gamma_{N,t}^{(k_0,l_0)}-P^{(k_0,l_0)}_t)S_{k,l,1}^{1/2})}=0$. 
This together with \eqref{eq:hsstronger} and \eqref{eq:skk0} gives $$\tr{\left|S_{k,l,1}^\frac12\left(\gamma_{N,t}^{(k_0,l_0)}-P^{(k_0,l_0)}_t\right)S_{k,l,1}^\frac12\right|}\to0\quad\mbox{as\ }N_i\to\infty,N_1/N_2=R^2.$$
Using  Lemma \ref{lem:partialtracenorm} below and \eqref{partialtr}, this implies
$$\tr{\left|S_{k,l,1}^\frac12\left(\gamma_{N,t}^{(k,l)}-P^{(k,l)}_t\right)S_{k,l,1}^\frac12\right|}\;\leq\;\tr{\left|S_{k,l,1}^\frac12\left(\gamma_{N,t}^{(k_0,l_0)}-P^{(k_0,l_0)}_t\right)S_{k,l,1}^\frac12\right|}\to 0$$
as $N_i\to\infty,N_1/N_2=R^2$.
\begin{lemma}\label{lem:partialtracenorm}
Let $\rho$ be a trace class operator acting on $L^2(\R^3)^{\otimes m}$ and
$\tr_{q+1,\ldots,m}\rho$ be its partial trace over the last $m-q$ particle coordinates, where  $0<q \leq m$. Then $ \tr |\tr_{q+1,\ldots,m}\rho| \leq \tr|\rho| $.
\end{lemma}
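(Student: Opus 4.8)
The plan is a short duality argument combining the polar decomposition of the partial trace with the defining property of the partial trace itself. First I would set $\sigma:=\tr_{q+1,\ldots,m}\rho$, which is trace class on $L^2(\R^3)^{\otimes q}$, and take its polar decomposition $\sigma=V|\sigma|$, where $V$ is a partial isometry with $\|V^*\|\le 1$; then $V^*\sigma=V^*V|\sigma|=|\sigma|$ on the support of $|\sigma|$, so that $\tr|\sigma|=\tr(V^*\sigma)$.

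The key step is the identity
\begin{equation*}
\tr\big(B\,\tr_{q+1,\ldots,m}\rho\big)\;=\;\tr\big((B\otimes I^{\otimes(m-q)})\rho\big),
\end{equation*}
valid for every bounded operator $B$ on $L^2(\R^3)^{\otimes q}$. For $B$ a finite sum of product operators this is exactly the defining relation of the partial trace used in \eqref{eq:s-k-l-theta}; I would pass to a general bounded $B$ by weak-$*$ density, using that $(B\otimes I^{\otimes(m-q)})\rho$ is again trace class whenever $\rho$ is, with $\|(B\otimes I^{\otimes(m-q)})\rho\|_{\mathrm{Tr}}\le\|B\|\,\tr|\rho|$, and that both sides depend weak-$*$ continuously on $B$. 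Applying this with $B=V^*$ gives $\tr|\sigma|=\tr\big((V^*\otimes I^{\otimes(m-q)})\rho\big)$.

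It then only remains to estimate. Using the elementary bound $|\tr(CT)|\le\|C\|\,\tr|T|$ for $C$ bounded and $T$ trace class, together with $\|V^*\otimes I^{\otimes(m-q)}\|=\|V^*\|\le 1$, I obtain
\begin{equation*}
\tr\big|\tr_{q+1,\ldots,m}\rho\big|\;=\;\Big|\tr\big((V^*\otimes I^{\otimes(m-q)})\rho\big)\Big|\;\le\;\|V^*\otimes I^{\otimes(m-q)}\|\,\tr|\rho|\;\le\;\tr|\rho|,
\end{equation*}
which is the claim.

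I do not expect a real obstacle, as this is the standard fact that the partial trace is a contraction in trace norm; the only point needing a word of care is the extension of the defining relation of the partial trace from finite sums of product operators to an arbitrary bounded $B$. This can be handled by the density argument above, or circumvented by writing $\tr|\sigma|=\sup\{|\tr(B\sigma)|:B\in\cB(L^2(\R^3)^{\otimes q}),\ \|B\|\le1\}$ and using the ampliation identity only for $B$ ranging over finite-rank operators, which are norm-dense in the compact operators and for which the identity is immediate. Alternatively, one can avoid duality entirely: writing $\rho$ in a Schmidt-type decomposition $\rho=\sum_k\lambda_k\ket{f_k}\bra{g_k}$ with $\lambda_k\ge0$ and $\{f_k\},\{g_k\}$ orthonormal, computing $\tr_{q+1,\ldots,m}\rho$ as a sum over an orthonormal basis of $L^2(\R^3)^{\otimes(m-q)}$, and applying the triangle inequality followed by Cauchy--Schwarz, one recovers the bound $\sum_k\lambda_k=\tr|\rho|$ directly.
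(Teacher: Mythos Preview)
Your argument is correct and is essentially the same duality proof as in the paper: the paper writes $\tr|\tr_{q+1,\ldots,m}\rho|=\sup_{\|a\|\le 1}|\tr(a\,\tr_{q+1,\ldots,m}\rho)|=\sup_{\|a\|\le 1}|\tr((a\otimes I^{\otimes(m-q)})\rho)|\le\tr|\rho|$ using the duality between trace class and bounded operators, while you exhibit the maximizer $a=V^*$ explicitly via polar decomposition. The alternative you mention at the end (taking the supremum over $\|B\|\le 1$) is exactly the paper's route.
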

\begin{proof}
	The proof is well-known but for convenience of the reader we will outline it. It is well-known, see for example \cite[Theorem VI.26]{RS1}, that the dual of the trace class operators acting on a Hilbert space $H$ is the space $B(H)$ of bounded operators acting on $H$. Therefore, 
\begin{equation*}
\begin{split}
\tr|\tr_{q+1,\ldots,m}\rho|\;&=\;\sup_{\|a\|_{B(L^2(\mathbb{R}^3)^{\otimes
			q})}=1}|\tr(a\tr_{q+1,\ldots,m}\rho)|\\&=\;\sup_{\|a\|_{B(L^2(\mathbb{R}^3)^{\otimes
			q})}=1}|\tr((a \otimes I^{\otimes m-q})
\rho)| \\&\;\leq \sup_{\|A\|_{B(L^2(\mathbb{R}^{3})^{\otimes m})}=1}|\tr (A\rho)|= \tr|\rho|.
\end{split}
\end{equation*}	
\end{proof}	
	
	\begin{remark}
From the proofs one sees that there are some generalizations of Theorems \ref{thm:magnetic} and \ref{thm:semirel}. A slight modification of the arguments shows that part (ii) of Theorem \ref{thm:magnetic}  remains true if  instead of factorized  initial data we assume
  \begin{align*}
  \lim_{\substack{N_1,N_2\to\infty\\ N_1/
  	N_2=R^2}}&\tr|\gamma_{N,0}^{(1,0)}-P^{(1,0)}_0|\,=\,\lim_{\substack{N_1,N_2\to\infty\\ N_1/N_2=R^2}}\tr|\gamma_{N,0}^{(0,1)}-P^{(0,1)}_0|\\
  &=\lim_{\substack{N_1,N_2\to\infty\\ N_1/N_2=R^2}}\frac{1}{N_1} \left(\langle \Psi_{N,0}, H_N \Psi_{N,0} \rangle-\langle \psi_0^{\otimes N_1} \otimes  \vphi_0^{\otimes N_2}, H_N \psi_0^{\otimes N_1} \otimes  \vphi_0^{\otimes N_2} \rangle \right)=0.
  \end{align*}
Part (i) of Theorem \ref{thm:magnetic} and Theorem \ref{thm:semirel} can also be generalized to cases where the initial data does not have to be factorized.  For example, if we assume \ref{itm:rm} and that $\langle \Psi_{N,0}, H_N \Psi_{N,0} \rangle$ is finite, then for any $R>0$ there are constants $C,D>0$ depending on $\frac{1}{N_1}\langle \Psi_{N,0}, H_N \Psi_{N,0} \rangle$ such that for all $t>0$, $0\leq k\leq N_1$, $0\leq l\leq N_2$ and $\theta\in[0,1)$ we have
	\begin{align}\nonumber
	\tr{\left|S_{k,l,\theta}^{\frac12}\left(\gamma_{N,t}^{(k,l)} -P^{(k,l)}_t\right)S_{k,l,\theta}^{\frac12}\right|}& \leq C(k+l)\left(\frac{(k+l)}{N_1}+a_{N}^{(1)}(\Psi_{N,0},\psi_0)+a_{N}^{(2)}(\Psi_{N,0},\vphi_0)\right)^{\frac{1-\theta}{2}}\e^{Dt}.
	\end{align}
 Moreover, in the magnetic case the interaction potentials do not have to be Coulomb or Yukawa potentials; an assumption of the form $|u^{(i,j)}|^2 \leq C (1-\Delta)$ for some $C>0$ would suffice. Furthermore, under certain regularity conditions so that \eqref{hartree} remains well-posed and the $H^1$- or $H_A^1$-norms of the involved fields do not blow up, one may add an external potential $V_{ext}$. Indeed, observe that the calculation of $\partial_ta_N^{(i)}$ in the proof remains unchanged if we replace $S_j^{(i)}$ by $S_j^{(i)}+V_{ext}(x_{j}^{(i)})$. 
		\end{remark}

\appendix 

\section{Proof of \eqref{ineq:averagekinetic} and of self-adjointness of $H_N$ in the semi-relativistic case }\label{app:self-adjoint}

Using Kato's inequality
\begin{equation} \label{ineq:Kato}
\frac{1}{|x|} \leq \frac{\pi}{2}(-\Delta)^{1/2},
\end{equation}
see \cite{Her}, we obtain that
\begin{equation*}
\sum_{j=1}^{N_i} \frac{1}{N_i-1} \sum_{k=j+1}^{N_i} u^{(i,i)}(x_{j}^{(i)}-x_{k}^{(i)}) \geq -\frac{\pi \lambda_-^{(i,i)}}{4} \sum_{j=1}^{N_i} \sqrt{m_i-\Delta_{x_j^{(i)}}}
\end{equation*}
and therefore, for all $\Psi_N \in H^{1/2}(\R^{3(N_1+N_2)})$.
\begin{equation*}
\bigg\langle \Psi_N, \sum_{j=1}^{N_i} \frac{1}{N_i-1} \sum_{k=j+1}^{N_i} u^{(i,i)}(x_{j}^{(i)}-x_{k}^{(i)}) \Psi_N \bigg\rangle \geq -\frac{\pi \lambda_-^{(i,i)}}{4} N_i a_i^2,
\end{equation*}
where $a_i:=\|(m_i-\Delta_{x_1^{(i)}})^\frac{1}{4} \Psi_N\|_2$. Since the interaction potential between the two different species can be estimated by the kinetic energies of any of the species, we see that 
\begin{equation*}
\bigg\langle \Psi_N, \sum_{i=1}^{N_1} \sum_{j=1}^{N_2} u^{(1,2)}(x_{i}^{(1)}-x_{j}^{(2)}) \Psi_N \bigg\rangle\geq -\frac{\pi \lambda_-^{(1,2)}}{2} N_1 N_2\min\left(a_1^2,a_2^2\right) \geq -\frac{\pi \lambda_-^{(1,2)}}{2} N_1 N_2 a_1 a_2.
\end{equation*}
Using the above inequalities, \ref{itm:mf}, and completion of the square, we arrive at
\begin{equation}\label{srsa}
\begin{split}
\langle \Psi_N, H_N \Psi_N \rangle \geq& \sum_{i=1}^2\left(1-\frac{\pi \lambda_-^{(i,i)}}{4}\right) N_i a_i^2 -\frac{\pi \lambda_-^{(1,2)}}{2} \sqrt{N_1 N_2} a_1 a_2\\
=&\frac{\pi}{4} \left(\sqrt{\frac{4}{\pi}- \lambda_-^{(1,1)}} \sqrt{N_1} a_1-\sqrt{\frac{4}{\pi}- \lambda_-^{(2,2)}} \sqrt{N_2} a_2\right)^2\\
&+\frac{\pi}{2} \sqrt{N_1 N_2} a_1 a_2 \left( \sqrt{\left(\frac{4}{\pi}- \lambda_-^{(1,1)}\right) \left(\frac{4}{\pi}- \lambda_-^{(2,2)}\right)}-\lambda_-^{(1,2)}\right).
\end{split}
\end{equation}
Using  now \ref{itm:sr} it follows that $H_N$ defines a positive quadratic form, which by Kato's inequality is closed in $H^{1/2}$. Therefore, $H_N$ can be realized as a self-adjoint operator see Theorem 2, Chapter 10 in \cite{BS}. 

The inequality \eqref{ineq:averagekinetic}
follows now from \eqref{srsa} together with the conservation of $\langle \Psi_{N,t}, H_N \Psi_{N,t} \rangle$ and the uniform boundedness of $\langle \Psi_{N,0}, H_N \Psi_{N,0} \rangle/\sqrt{N_1 N_2}$ with respect to $N_1, N_2$, when $\sqrt{N_1/N_2}=R$.

\section{Proof of Theorem \ref{thm:wellposed} }\label{app:wellposed}
\subsection{Magnetic case \label{app:mag}}
\paragraph{Local well-posedness.} A mild solution of the initial value problem \eqref{hartree} can be written as the system 
\begin{equation}\label{eq:integral}
\begin{split}
\left(\begin{array}{cc} \psi_t \\ \vphi_t \end{array}\right)&=F(\psi,\vphi)(t),\\
F(\psi,\vphi)(t)&:=
\left(\begin{array}{cc}
\e^{-\i D_A^2 t} \psi_0 - \i  \int_0^t \e^{-\i D_A^2 (t-s)} f_{\psi_s,\vphi_s}  \psi_s\\
\e^{-\i D_A^2 t} \vphi_0 - \i  \int_0^t \e^{-\i D_A^2 (t-s)} g_{\vphi_s, \psi_s} \vphi_s
\end{array}\right),
\end{split}	
\end{equation}
with $f_{\psi_s,\vphi_s}= u^{(1,1)}*|\psi_s|^2 + \frac{1}{R} u^{(1,2)}*|\vphi_s|^2$,  
$ g_{\psi_s,\vphi_s} =  u^{(2,2)}*|\vphi_s|^2 + R u^{(1,2)}*|\psi_s|^2$.
Note that for every $A \in L_{\text{loc}}^2(\R^3)$ the quadratic form $q:C_c^\infty(\R^3)^2\to\C$, $q(\vphi, \psi):=\langle D_A \vphi, D_A \psi \rangle$ defines a positive closable quadratic form because $C_c^\infty(\R^3)$ is dense in $H_A^1(\R^3)$ (see \cite{simon-schroedinger-forms} or \cite[Theorem 7.22]{LL}). Therefore, $D_A^2$ can be realized as the self-adjoint operator associated with the closure of $q$ see \cite[Chapter 10, Theorem 2]{BS}. As a consequence, the unitary group $\e^{-\i D_A^2 t}$ is well-defined and strongly continuous on $L^2(\R^3)$. For the local well-posedness, that is, existence and uniqueness of a solution in the space given by the right hand side of \eqref{eq:CC1} for some $T>0$,  one can argue similarly as in 
\cite{Ca}. The proof is based as usual on the Banach fixed-point theorem. For convenience of the reader, we briefly outline here how $F$ can be proven to be Lipschitz continuous. We have

\begin{equation}\label{eq:difference-nonlinearity}
F(\psi,\vphi)(t)-F(\widetilde{\psi},\widetilde{\vphi})(t)=\left(\begin{array}{cc}
-\i \int_0^t \e^{-\i D_A^2 (t-s)} (f_{\psi_s,\vphi_s}  \psi_s-f_{\widetilde{\psi}_s,\widetilde{\vphi}_s}  \widetilde{\psi}_s)\\
-\i \int_0^t \e^{-\i D_A^2 (t-s)} (g_{\psi_s,\vphi_s}  \vphi_s-g_{\widetilde{\psi}_s,\widetilde{\vphi}_s}  \widetilde{\vphi}_s)
\end{array}\right).
\end{equation}
We drop the time index $s$ for most of the following calculations. 
We start by giving an upper bound on the $H^1_A$-norm of $f_{\psi,\vphi}  \psi-f_{\widetilde{\psi},\widetilde{\vphi}}  \widetilde{\psi}$ by  decomposing  
\begin{displaymath} 
	f_{\psi,\vphi}  \psi-f_{\widetilde{\psi},\widetilde{\vphi}}  \widetilde{\psi}
	= f_{\psi,\vphi} (\psi-\widetilde{\psi})+ (f_{\psi,\vphi}-f_{\widetilde{\psi},\widetilde{\vphi}})  \widetilde{\psi} 	
\end{displaymath}
and we will start with the term $(f_{\psi,\vphi}-f_{\widetilde{\psi},\widetilde{\vphi}})  \widetilde{\psi}$  since  bounding the term $f_{\psi,\vphi} (\psi-\widetilde{\psi})$ in $H^1_A$ is similar and slightly easier.  We start with the rough estimate
\begin{displaymath}
\|(f_{\psi,\vphi}-f_{\widetilde{\psi},\widetilde{\vphi}})  \widetilde{\psi}\|_{L^2} \leq \|f_{\psi,\vphi}-f_{\widetilde{\psi},\widetilde{\vphi}}\|_{L^\infty}      \|\widetilde{\psi}\|_{L^2}\, , 	
\end{displaymath}
and note that the triangle inequality gives 
\begin{align}\label{eq:Linfty}
	 \|f_{\psi,\vphi}-f_{\widetilde{\psi},\widetilde{\vphi}}\|_{L^\infty} \leq  \|u^{(1,1)}*(|\psi|^2- |\widetilde{\psi}|^2)\|_{L^\infty} + \frac{1}{R}\|u^{(1,2)}*(|\vphi|^2- |\widetilde{\vphi}|^2)\|_{L^\infty}. 
\end{align}
 We now bound the first term of the r.h.s. of \eqref{eq:Linfty}, and the last term can be bounded in exactly the same way. The fact that $||\psi|^2- |\widetilde{\psi}|^2| \leq |\psi-\widetilde{\psi}|(|\psi|+|\widetilde{\psi}|)$, together with the Cauchy-Schwarz inequality in the convolution integral, yields
\begin{displaymath}
	\|u^{(1,1)}*(|\psi|^2- |\widetilde{\psi}|^2)\|_{L^\infty} \leq \||u^{(1,1)}|^2*(|\psi- \widetilde{\psi}|^2)\|_{L^\infty}^{1/2} (\|\psi\|_{L^2}+\|\widetilde{\psi}\|_{L^2}).
\end{displaymath}
In addition, \eqref{eq:uij2} gives
\begin{displaymath}
 \||u^{(1,1)}|^2*|\psi- \widetilde{\psi}|^2\|_{L^\infty} \leq 4 (\lambda^{(1,1)})^2 \|D_A(\psi-\widetilde{\psi})\|_{L^2}^2.	
\end{displaymath}
Thus 
\begin{equation}\label{eq:Linfty1}
\begin{split}
	\|f_{\psi,\vphi}-f_{\widetilde{\psi},\widetilde{\vphi}}\|_{L^\infty} 
	& \leq  2|\lambda^{(1,1)}| (\|\psi\|_{L^2} + \|\widetilde{\psi}\|_{L^2}) 
	\|D_A(\psi-\widetilde{\psi})\|_{L^2}  \\
	&\phantom{\leq ~}+ \frac{2|\lambda^{(1,2)}|}{R} (\|\varphi\|_{L^2} + \|\widetilde{\varphi}\|_{L^2}) 
	\|D_A(\varphi-\widetilde{\varphi})\|_{L^2}  
\end{split}
\end{equation}
and therefore 
\begin{equation}
\begin{split}
	\|(f_{\psi,\vphi}-f_{\widetilde{\psi},\widetilde{\vphi}})\widetilde{\psi}\|_{L^2} 
	& \leq  2|\lambda^{(1,1)}| (\|\psi\|_{L^2} + \|\widetilde{\psi}\|_{L^2}) \|\widetilde{\psi}\|_{L^2}
	\|D_A(\psi-\widetilde{\psi})\|_{L^2}  \\
	&\phantom{\leq ~}+ \frac{2|\lambda^{(1,2)}|}{R} (\|\varphi\|_{L^2} + \|\widetilde{\varphi}\|_{L^2}) \|\widetilde{\psi}\|_{L^2}
	\|D_A(\varphi-\widetilde{\varphi})\|_{L^2}  
\end{split}
\end{equation}
Similarly, using \eqref{eq:Linfty1} with $\widetilde{\psi}=\widetilde{\varphi}=0$, one gets 
\begin{displaymath}
\begin{split}
	\|f_{\psi,\vphi} (\psi-\widetilde{\psi})\|_{L^2}
	\le \Big( 2|\lambda^{(1,1)}| \|\psi\|_{L^2}   
	\|D_A \psi \|_{L^2}  + \frac{2|\lambda^{(1,2)}|}{R} \|\varphi\|_{L^2}  
	\|D_A \varphi \|_{L^2} \Big)\|\psi-\widetilde{\psi}\|_{L^2}.
\end{split}  
\end{displaymath}
Putting everything together, we have 
\begin{equation}
\begin{split}
	\| f_{\psi,\vphi}  \psi-f_{\widetilde{\psi},\widetilde{\vphi}}  \widetilde{\psi} \|_{L^2}
	&\le 2|\lambda^{(1,1)}| (\|\psi\|_{L^2} + \|\widetilde{\psi}\|_{L^2}) \|\widetilde{\psi}\|_{L^2}
	\|D_A(\psi-\widetilde{\psi})\|_{L^2}  \\
	&\phantom{\leq ~}+ \frac{2|\lambda^{(1,2)}|}{R} (\|\varphi\|_{L^2} + \|\widetilde{\varphi}\|_{L^2}) \|\widetilde{\psi}\|_{L^2}
	\|D_A(\varphi-\widetilde{\varphi})\|_{L^2}  \\
	&\phantom{\leq ~}+\Big( 2|\lambda^{(1,1)}| \|\psi\|_{L^2}   
	\|D_A \psi \|_{L^2}  + \frac{2|\lambda^{(1,2)}|}{R} \|\varphi\|_{L^2}  
	\|D_A \varphi \|_{L^2} \Big)\|\psi-\widetilde{\psi}\|_{L^2}
\end{split}
\end{equation}

We now need to estimate $\|D_A((f_{\psi,\vphi}-f_{\widetilde{\psi},\widetilde{\vphi}})  \widetilde{\psi})\|_{L^2}$. 
The Leibnitz rule implies
$$D_A((f_{\psi,\vphi}-f_{\widetilde{\psi},\widetilde{\vphi}})  \widetilde{\psi})= (f_{\psi,\vphi}-f_{\widetilde{\psi},\widetilde{\vphi}}) D_A \widetilde{\psi} - \i \widetilde{\psi} (\nabla (f_{\psi,\vphi}-f_{\widetilde{\psi},\widetilde{\vphi}}))$$
The $L^2$-norm of the first summand of the right hand side can be easily bounded using the bound \eqref{eq:Linfty1} on  $\|f_{\psi,\vphi}-f_{\widetilde{\psi},\widetilde{\vphi}}\|_{L^\infty}$. Next we outline how to estimate 
$\|\nabla (f_{\psi,\vphi}-f_{\widetilde{\psi},\widetilde{\vphi}})\|_{L^\infty}.$ 
Note that as $C_c^\infty(\R^3)$ is dense
in $H^1_A(\R^3)$ (see \cite{simon-schroedinger-forms} or \cite[Theorem 7.22]{LL}), we may assume in the following that $\psi, \vphi, \widetilde{\psi}, \widetilde{\vphi} \in C_c^\infty(\R^3)$.
 So with $u(x)=\tfrac{e^{-\mu |x|}}{|x|}$, an elementary computation gives for any $\chi\in C_c^\infty(\R^3)$
\begin{align*}
	\nabla\left( u*|\chi|^2 \right) =-2 u*\im(\overline{\chi}D_A\chi).
\end{align*}
 Using this, an elementary computation gives
\begin{align*}
\nabla (f_{\psi,\vphi}-f_{\widetilde{\psi},\widetilde{\vphi}})=& -2   u^{(1,1)} * \Im(\overline{\psi} D_A \psi-\overline{\widetilde{\psi}} D_A \widetilde{\psi}) - \frac{2  }{R} u^{(1,2)} * \Im(\overline{\vphi} D_A \vphi-\overline{\widetilde{\vphi}} D_A \widetilde{\vphi}).
\end{align*} 
To finish the proof we can proceed as above using Cauchy-Schwarz and \eqref{eq:uij2} to see that 
\begin{align*}
	\|\nabla &(f_{\psi,\vphi}-f_{\widetilde{\psi},\widetilde{\vphi}})\|_{L^\infty}
	\le \\
	&C  \left[ (\|D_A\psi\|_{L^2} 
		+ \|D_A\widetilde{\psi}\|_{L^2})\|D_A(\psi -\widetilde{\psi})\|_{L^2}
		+ (\|D_A\vphi\|_{L^2} 
		+ \|D_A\widetilde{\vphi}\|_{L^2})\|D_A(\vphi -\widetilde{\vphi})\|_{L^2}
	\right] 
\end{align*}

Putting the above bounds together shows that there exist a constant $C$, depending only on $|\lambda^{(1,1)}|, |\lambda^{(1,2)}|$ and $R$,  such that 
\begin{equation}\label{eq:Lipschitz}
	\|f_{\psi,\varphi}\psi- f_{\widetilde{\psi},\widetilde{\varphi}}\widetilde{\psi}\|_{H^1_A} 
	\le C \left( \|\psi\|_{H^1_A} + \|\varphi\|_{H^1_A}+\|\widetilde{\psi}\|_{H^1_A} + \|\widetilde{\varphi}\|_{H^1_A} \right)^2 
	\left(  \|\psi-\widetilde{\psi}\|_{H^1_A}+ \|\varphi-\widetilde{\varphi}\|_{H^1_A}\right)
\end{equation}
and a similar bound holds for $\|g_{\psi,\vphi} \vphi- g_{\psi,\vphi}\widetilde{\vphi}\|_{H^1_A}$ with another constant $\widetilde{C}$ depending only on $|\lambda^{(2,2)}|, |\lambda^{(1,2)}|$ and $R$. Thus for the nonlinearity $F$ we get from \eqref{eq:difference-nonlinearity} that  there exist a constant $C$, depending only on $|\lambda^{(1,1)}|, |\lambda^{(1,2)}|, |\lambda^{(2,2)}|$ and $R$,  such that 
\begin{align*}
	\|F(\psi,\varphi)&(t)- F(\widetilde{\psi},\widetilde{\varphi})(t)\|_{H^1_A\times H^1_A} \\
	&\le  Ct\left( \|\psi\|_{H^1_A} + \|\varphi\|_{H^1_A}+\|\widetilde{\psi}\|_{H^1_A} + \|\widetilde{\varphi}\|_{H^1_A} \right)^2 
	\left( \|\psi-\widetilde{\psi}\|_{H^1_A}+ \|\varphi-\widetilde{\varphi}\|_{H^1_A}\right)\, . 
\end{align*}
 
 Having this contraction property, a standard contraction mapping argument in the space $C\left([0,T), H_A^1(\R^3)\times H_A^1(\R^3)\right) $ shows that there is a unique solution in this space, as long as $T$ is small enough. Note that one can show in a standard way the blow up alternative, namely, that if there is a maximal finite $T$ of existence of a unique solution, then $\lim_{t \rightarrow T-} (\|\vphi_t\|_{H^1_A}+\|\psi_t\|_{H^1_A})=\infty$.

\paragraph{Global well-posedness.} A conserved quantity  of the dynamics \eqref{hartree} related to the energy functional $\mathcal{H}_N(\overline{\psi_t},\overline{\vphi_t},\psi_t,\vphi_t)$ of \eqref{totalpde} is
\begin{equation}\label{eq:hndef} 
\begin{split}
 \cE_{mag}(\overline{\psi_t},\overline{\vphi_t},\psi_t,\vphi_t)\;:=\;&R \left(\|D_A\psi_t\|_{L^2}^2+ \frac{1}{2} \scp{\psi_t}{ u^{(1,1)}*|\psi_t|^2\psi_t}\right)   \\
 &+ \frac1R\left(\|D_A\vphi_t\|_{L^2}^2 +\frac{1}{2}\scp{\vphi_t}{ u^{(2,2)}*|\vphi_t|^2\vphi_t}\right)\\
 &+\scp{ \psi_t \otimes  \vphi_t}{ u^{(1,2)}(x_1^{(1)}-x_1^{(2)})  \psi_t \otimes \vphi_t }.
\end{split}
\end{equation}
We will show this by showing that $\cE_{mag}$ is differentiable and $\partial_t\cE_{mag}=0$. The usual problem with this approach is that the solution we have does not have enough regularity to guarantee that the kinetic energy terms  $\|D_A\psi_t\|_{L^2}^2$ and $\|D_A\vphi_t\|_{L^2}^2$ are differentiable. This is simply due to the fact that from the nonlinear Hartee equation we only know $ \partial_t \psi_t= -\i D_A^2\psi_t -f_{\psi_t,\varphi_t}\psi_t\in H^{-1}_A+ H^1_A= H^{-1}_A$ and thus the informal calculation 
\begin{align*} 
	\partial_t \|D_A\psi_t\|_{L^2}^2= 2\re\left(\scp{-\i D_A^2\psi_t}{D_A^2 \psi_t}  +\scp{-\i f_{\psi_t,\varphi_t}\psi_t}{D_A^2\psi}\right) 
	= \re\scp{f_{\psi_t,\varphi_t}\psi_t}{\i D_A^2\psi},
\end{align*}
using that $\langle-\i D_A^2\psi_t, D_A^2 \psi_t \rangle$ is purely imaginary, is not justified because the term $\langle D_A^2\psi_t, D_A^2 \psi_t \rangle$ might be infinite. 

The usual way out is to do an approximation argument, which guarantees some higher regularity of the solutions, so that the above informal calculation becomes correct. However, in this case one either has to establish an existence theory for initial conditions with higher regularity, or invoke some regularization mechanism and control the error. 

Our way out is to work with the interaction picture instead. We rewrite the kinetic energy as 
$\scp{\psi_t}{D_A^2\psi_t}=\scp{D_A\e^{\i D_A^2t}\psi_t}{D_A\e^{\i D_A^2t}\psi_t}$. 
 From \eqref{eq:integral} it follows that
\begin{equation}
\partial_t\left(\e^{\i D_A^2t}\psi_t\right)\;= \;- \i\, \e^{\i D_A^2t} f_{\psi_t, \varphi_t} \psi_t.
\end{equation}
where the right hand side is in $H^1_A(\R^3)$ as follows from \eqref{eq:Lipschitz} by setting $\tilde{\vphi}, \tilde{\psi}=0$. Thus 
\begin{equation}\label{eq:kinder}
\partial_t\scp{\psi_t}{D_A^2\psi_t}=2 \Re \scp{D_A e^{i D_A^2 t} \psi_t}{-\i\, D_A (e^{i D_A^2 t} f_{\psi_t,\vphi_t} \psi_t)}=2 \Re \scp{i D_A  \psi_t}{D_A  (f_{\psi_t,\vphi_t} \psi_t)}.
\end{equation}
We now consider the term
$\scp{ \psi_t \otimes  \vphi_t}{ u^{(1,2)}(x_1^{(1)}-x_1^{(2)})  \psi_t \otimes \vphi_t }$. From the product rule we get 
\begin{equation}\label{eq:derinter1}
\begin{split}
\partial_t\scp{ \psi_t \otimes  \vphi_t}{ u^{(1,2)}(x_1^{(1)}-x_1^{(2)})  \psi_t \otimes \vphi_t }\;=\;&2 \Re \scp{ \partial_t\psi_t}{ u^{(1,2)}*|\vphi_t|^2 \psi_t }\\&+2 \Re \scp{ \partial_t\vphi_t}{ u^{(1,2)}*|\psi_t|^2 \vphi_t },
\end{split}
\end{equation}
where in the right hand side the inner product is interpreted as the canonical pairing of $H^{-1}_A$ with $H^{1}_A$. Observe now that
 \begin{equation}\label{eq:derinter11}
 \begin{split}
 \Re \scp{ \partial_t\psi_t}{ u^{(1,2)}*|\vphi_t|^2 \psi_t }&=\Re \scp{ -\i D_A^2\psi_t-\i f_{\psi_t,\vphi_t}\psi_t}{ u^{(1,2)}*|\vphi_t|^2 \psi_t }\\
 &=\Re \scp{ -\i D_A\psi_t}{ D_A(u^{(1,2)}*|\vphi_t|^2 \psi_t) }
 \end{split}
 \end{equation}
where the last equality follows from the fact that $D_A$ is self-adjoint and from the observation that
$Re \scp{ -\i f_{\psi_t,\vphi_t}\psi_t}{ u^{(1,2)}*|\vphi_t|^2 \psi_t }=0$. Similarly, 
\begin{equation}\label{eq:derinter12}
\Re \scp{ \partial_t\vphi_t}{ u^{(1,2)}*|\psi_t|^2 \vphi_t }=\Re \scp{ -\i D_A\vphi_t}{ D_A(u^{(1,2)}*|\psi_t|^2 \vphi_t) }
\end{equation}
From \eqref{eq:derinter1}, \eqref{eq:derinter11} and \eqref{eq:derinter12} it follows that
\begin{equation}\label{eq:inter111}
\begin{split}
\partial_t\scp{ \psi_t \otimes  \vphi_t}{ u^{(1,2)}(x_1^{(1)}-x_1^{(2)})  \psi_t \otimes \vphi_t }
\;=\;&2 \Re \scp{ -\i D_A\psi_t}{ D_A(u^{(1,2)}*|\vphi_t|^2 \psi_t) } \\ 
&+ 2 \Re \scp{ -\i D_A\vphi_t}{ D_A(u^{(1,2)}*|\psi_t|^2 \vphi_t) }.
\end{split}
\end{equation}
Similarly replacing $u^{(1,2)}$ with $u^{(1,1)}$ and setting $\vphi_t=\psi_t$ in the above equation we obtain that
\begin{equation}\label{eq:inter2}
\begin{split}
\partial_t \scp{\psi_t}{ u^{(1,1)}*|\psi_t|^2\psi_t}&= 
\partial_t \scp{ \psi_t \otimes  \psi_t}{ u^{(1,1)}(x_1^{(1)}-x_2^{(1)})  \psi_t \otimes \psi_t }\\
  &=4 \Re \scp{ -\i D_A\psi_t}{ D_A(u^{(1,1)}*|\psi_t|^2 \psi_t) }.
\end{split}
\end{equation}
From \eqref{eq:kinder} and \eqref{eq:inter2} it follows that
\begin{equation}\label{eq:derkind1}
\partial_t \left(\scp{\psi_t}{D_A^2\psi_t}+\frac{1}{2}\scp{\psi_t}{ u^{(1,1)}*|\psi_t|^2\psi_t}\right)
=\frac{2}{R} \Re \scp{ \i D_A\psi_t}{ D_A(u^{(1,2)}*|\vphi_t|^2 \psi_t) }
\end{equation}
In a similar manner we find
\begin{equation}\label{eq:derkind2}
\partial_t \left(\scp{\psi_t}{D_A^2\vphi_t}+\frac{1}{2}\scp{\varphi_t}{ u^{(2,2)}*|\varphi_t|^2\varphi_t}\right)
=2 R\,  \Re \scp{ \i D_A\vphi_t}{ D_A(u^{(1,2)}*|\psi_t|^2 \vphi_t) }
\end{equation}
From \eqref{eq:hndef}, \eqref{eq:inter111}, \eqref{eq:derkind1} and \eqref{eq:derkind2} 
it follows that 
$\partial_t\cE_{mag}=0$ and so $\cE_{mag}$ is conserved. Similarly an easier computation shows that the masses $\|\psi_t\|_2=\|\vphi_t\|_2=1$ are conserved.

Using the conservation of masses and
\eqref{eq:uij2}, we obtain that
\begin{equation}\label{eq:derunbdd}
\cE_{mag} \geq C_1\left( \|D_A\psi_t\|_2^2+\|D_A\vphi_t\|_2^2\right)-C_2
\end{equation}
for some constants $C_1,C_2>0$, which together with the conservation of energy and the masses eliminates the blow up alternative and therefore global well-posedness
follows.

\subsection{Semi-relativistic case\label{app:sr}}
We will first explain how to obtain global well-posedness in $H^{\frac{1}{2}} \times H^{\frac{1}{2}}$. As the local well-posedness in $H^{\frac{1}{2}} \times H^{\frac{1}{2}}$ is a straightforward rewritting of the arguments in \cite{Le}, where the case of one species of particles is treated, we will only explain how the local well-posedness implies global well-posedness.   Note that Kato's inequality and the initial condition beeing in $H^{\frac{1}{2}} \times H^{\frac{1}{2}}$ ensure that the initial energy is finite. The $L^2$-norms of $\psi_t$ and $\vphi_t$ are also preserved. We will show that there exist constants $C> 0$ such that 
\begin{equation}\label{est:energy}
	 \|(m_1^2-\Delta)^{\frac{1}{4}}\psi_t\|_2^2+\|(m_2^2-\Delta)^{\frac{1}{4}}\vphi_t\|_2^2 \leq C \cE_{sr}(\overline{\psi_t},\overline{\vphi_t},\psi_t,\vphi_t),
\end{equation}
where $\cE_{sr}$ is analogously to $\cE_{mag}$ above defined as
\begin{equation}\label{eq:hnsr} 
\begin{split}
\cE_{sr}(\overline{\psi_t},\overline{\vphi_t},\psi_t,\vphi_t)\;:=\;&R \left(\scp{\psi_t}{  \sqrt{m_1^2-\Delta}\psi_t}+ \frac{1}{2} \scp{\psi_t}{ u^{(1,1)}*|\psi_t|^2\psi_t}\right)   \\
&+ \frac1R\left(\scp{ \vphi_t}{\sqrt{m_2^2-\Delta} \vphi_t} +\frac{1}{2}\scp{\vphi_t}{ u^{(2,2)}*|\vphi_t|^2\vphi_t}\right)\\
&+\scp{ \psi_t \otimes  \vphi_t}{ u^{(1,2)}(x_1^{(1)}-x_1^{(2)})  \psi_t \otimes \vphi_t }.
\end{split}
\end{equation}
The following estimates are completely analogous to those in Appendix \ref{app:self-adjoint}. We define $a_1:= \|(m_1^2-\Delta)^{1/4} \psi_t\|_2$ and $a_2:=\|(m_1^2-\Delta)^{1/4} \vphi_t\|_2$. Then from Kato's inequality $\frac{1}{|x|} \leq \frac{\pi}{2}(-\Delta)^{1/2}$ and the fact that $\|\vphi_t\|_{2} = \| \psi_t \|_{2} =1$, it follows 

\begin{equation*}
\begin{split}
	 \frac{1}{2} \scp{\psi_t}{ u^{(1,1)}*|\psi_t|^2\psi_t}&\geq -\frac{\pi \lambda_-^{(1,1)}}{4} a_1^2,\\
	\frac{1}{2}\scp{\vphi_t}{ u^{(2,2)}*|\vphi_t|^2\vphi_t} &\geq -\frac{\pi\lambda_-^{(2,2)}}{4} a_2^2,\\
	\scp{ \psi_t \otimes  \vphi_t}{ u^{(1,2)}(x_1^{(1)}-x_1^{(2)})  \psi_t \otimes \vphi_t }&\geq-\frac{\pi\lambda_-^{(1,2)}}{2}\min\left(a_1^2,a_2^2\right) \geq -\frac{\pi\lambda_-^{(1,2)}}{2}a_1a_2.
\end{split}	
\end{equation*}
Using completion of the square, we arrive at
\begin{equation*}
\begin{split}
\cE_{sr}\;\geq\;&R \left(1-\frac{\pi \lambda^{(1,1)}_-}{4}\right) a_1^2 + \frac1R\left(1-\frac{\pi\lambda^{(2,2)}_-}{4}\right) a_2^2-\frac{\pi\lambda^{(1,2)}_-}{2}a_1a_2\\
=\;&\frac\pi4\left[\sqrt{R\left(\frac4\pi-\lambda^{(1,1)}_-\right)} a_1-\sqrt{\frac1R\left(\frac4\pi-\lambda^{(2,2)}_-\right)} a_2\right]^2\\
&+\frac\pi2a_1a_2\left[\sqrt{\left(\frac4\pi- \lambda^{(1,1)}_-\right)\left(\frac4\pi-\lambda^{(2,2)}_-\right)}-\lambda^{(1,2)}_-\right].
\end{split}
\end{equation*}
By assumption 	\ref{itm:sr} the last term is well-defined and positive. Arguing as in appendix \ref{app:mag}, one can show that $\cE_{sr}$ is differentiable with derivative zero. As a consequence, the estimate \eqref{est:energy} is satisfied and thus $\|\vphi_t\|_{H^{1/2}}^2 + \|\psi_t\|_{H^{1/2}}^2$ remains bounded. 

The local well-posedness in $H^1 \times H^1$ follows from the standard fixed point arguments outlined in \ref{app:mag} in the case $A=0$. The global well-posedness under our assumptions can be obtained similarly as in Lemma 3 in \cite{Le}: With Gronwall's inequality one can show that  $\|\vphi_t\|_{H^{1}}^2 + \|\psi_t\|_{H^{1}}^2$ grows at most exponentially in $t$ and in particular it cannot blow up in finite time.

\noindent {\bf Acknowledgments.}   Ioannis Anapolitanos and Dirk Hundertmark thank the Deutsche Forschungs\-gemeinschaft (DFG) for financial support through CRC 1173.

\end{document}